\newtheorem{lem}{Lemma}
\newtheorem{defn}{Definition}
\newtheorem{thm}{Theorem}
\newtheorem{cor}{Corollary}
\newtheorem{rem}{Remark}
\newtheorem{exam}{Example}
\begin{document}

\title{Perfect edge state transfer on cubelike graphs
}


\author{Xiwang Cao 
}


\institute{Xiwang Cao \at
              Nanjing University of Aeronautics and Astronautics, Nanjing, Jiangsu Province, P.R. China\\
              Tel.: +86-025 52113704\\
              Fax: +86-025 52113704\\
              \email{xwcao@nuaa.edu.cn}           
}

\date{Received: date / Accepted: date}

\maketitle

\begin{abstract}
Perfect (quantum) state transfer has been proved to be an effective model for quantum information processing.  In this paper, we give a characterization of cubelike graphs having perfect edge state transfer. By using a lifting technique, we show that every bent function, and some semi-bent functions as well, can produce some graphs having PEST. Some concrete constructions of such graphs are provided. Notably, using our method, one can obtain some classes of infinite graphs possessing PEST.


\keywords{perfect (quantum) state transfer \and perfect edge state transfer \and eigenvalues of a graph \and bent function}
\subclass{05C25 \and 81P45 \and 81Q35}
\end{abstract}

\section{Introduction }

Quantum algorithm is the crucial part of quantum information processing and computation and is the research field of both mathematicians and  engineers around a few decades. In quantum mechanic, a qubit is the quantum analogue of a classical bit. Whereas a bit can take any value in
the set $\{0,1\}$, a qubit can be assigned to any $1$-dimensional subspace from a $2$-dimensional
complex vector space. A quantum state is represented by a vector in the complex vector space $\mathbb{C}^{\bigotimes 2^n}$ which is of the form $\sum_{v_1,\cdots, v_n\in \{0,1\}}a_{v_1\cdots v_n}|v_1\rangle \cdots |v_n\rangle$, $a_{v_1\cdots v_n}\in \mathbb{C}$. Given a graph $\Gamma=(V,E)$ with
$n$ vertices, where $V$ is the vertex set and $E$ is the edge set, we suppose that the vertices of the graph represent qubits, and that the edges
represent quantum wires between such qubits. The exact correspondence is built up as the following way:

To each vertex $v\in V$ we assign a qubit, that
is, a two-dimensional complex vector space $H_v\simeq \mathbb{C}^2$.
Thus the graph is associated to a space
isomorphic to $\mathbb{C}^{2^n}$. Denote the standard basis vectors of $\mathbb{C}^2$ by $|0\rangle$ and $|1\rangle$. For any subset $S$ of $V$, define
\begin{equation*}
  Q_S=\otimes_{u\in V}|i(u)\rangle, \mbox{ where $i(u)=\left\{\begin{array}{cc}
                                                           1, & \mbox{ if $u\in S$}, \\
                                                           0, & \mbox{ otherwise}.
                                                         \end{array}
  \right.$}
\end{equation*}
Thus $Q_s$ corresponds to a qubit state. For the error operators, we consider the following Pauli matrices:
\begin{equation*}
  \sigma^x=\left(
             \begin{array}{cc}
               0 & 1 \\
               1 & 0 \\
             \end{array}
           \right), \sigma^y=\left(
                               \begin{array}{cc}
                                 0 & -\imath \\
                                 \imath & 0 \\
                               \end{array}
                             \right), \sigma^z=\left(
                                                 \begin{array}{cc}
                                                   1 & 0\\
                                                   0 & -1 \\
                                                 \end{array}
                                               \right).
\end{equation*}
For a given ordering of the rows of the adjacency matrix $A$ of $\Gamma$, and $v\in V$, we define:
\begin{equation*}
  \sigma^x_v=I_2\otimes \cdots \otimes I_2\otimes \underset{{\small v\mbox{th position}}}{\sigma^x}\otimes I_2\otimes \cdots \otimes I_2,
\end{equation*}
where the product contains $n$ multiplicands. We also consider analogous definitions for $\sigma^y_v$
and $\sigma^z_v$.
The energy of the system is expressed in
terms of a Hermitian matrix $H$, called the Hamiltonian.
For a time-independent Hamiltonian:
\begin{equation*}
 H:= H_{xy}=\frac{1}{2}\sum_{uv\in E}\left(\sigma_u^x\sigma_v^x+\sigma_u^y\sigma_v^y\right).
\end{equation*}
The Schr${\rm\ddot{o}}$dinger equation of quantum mechanics will imply that the
evolution of the system is governed by the matrix $\exp(-\imath tH\hbar)$, where $\imath=\sqrt{-1}$, $t$ is a positive time
and $\hbar$ is the Planck constant divided by $2\pi$ \cite[p. 8 and p. 19]{coutinho}.

Upon certain choices of a time-independent Hamiltonian, more specifically the above $XY$-coupling model, the quantum system defined in certain graphs will evolve $\exp(\imath tA)$, where $A$ is the adjacency matrix of the corresponding graphs. In such a scenario, the dynamics of the quantum states in each vertex resembles in some aspects the dynamics of a random walk \cite{FG}.
Following this approach, Childs et al \cite{child} found a graph in which the continuous-time quantum walk (the concept will be defined below) spreads exponentially faster than any classical algorithm for a certain black-box problem. Childs also showed that the continuous-time quantum walk model is a universal computational model \cite{Ch2}.

Let $\Gamma=(V,E)$ be a simple graph (without loops and multiple edges) where $V$ is the vertex set and $E$ is the edge set. Let $A$ be the adjacency matrix of $\Gamma$, i.e.,
\begin{equation*}
  A=(a_{uv})_{u,v\in V}, \mbox{ where }a_{uv}=\left\{\begin{array}{cc}
                                    1, & \mbox{ if $(u,v)\in E$}, \\
                                    0, & \mbox{ otherwise}.
                                  \end{array}
  \right.
\end{equation*}
A continuous random walk on a graph is determined by a sequence of matrices of the form $M(t)$, indexed by the vertices of $\Gamma$ and parameterized by a real positive time $t$. The $(u,v)$-entry of $M(t)$ represents the probability of starting at vertex $u$ and reaching vertex $v$ at time $t$. Define a continuous random walk on $\Gamma$ by setting
\begin{equation*}
  M(t)=\exp(t(A-D)),
\end{equation*}
where $D$ is a diagonal matrix. Then each column of $M(t)$ corresponds to a probability density of a walk whose initial state is the vertex indexing the column.

For quantum computations, Fahri and Gutmann \cite{FG} defined an analogue continuous quantum walk, termed the {\it transfer matrix} of a graph $\Gamma$, as the following $n\times n$ matrix:
\begin{equation*}
  H(t)=H_{\Gamma}(t)=\exp(\imath tA)=\sum_{s=0}^{+\infty}\frac{(\imath tA)^s}{s!}=(H_{g,h}(t))_{g,h\in V},\ \ t\in \mathbb{R},
\end{equation*}
where $n=|V|$ is the number of vertices in $\Gamma$.
Suppose that the initial state of a walk is given by a density matrix $Q$ as physicists usually do. Then the state $Q(t)$ at time $t$ is given by
 $$Q(t)=H(t)QH(-t).$$
 We call a density matrix $Q$ a {\it pure state} if ${\rm rank}(Q) = 1$. We use $e_a$ to denote the standard basis vector in
$\mathbb{C}^n$ indexed by the vertex $a$. Then
$$Q_a = e_ae_a^t$$
is the pure state associated to the vertex $a$.

Physicists are interested in the question whether there exists a time $t$ such that for two distinct vertices
$a$ and $b$, it happens that $ Q_a(t) = Q_b$. When the above phenomenon occurs, we say that there is {\it perfect state transfer} (PST, in short) from $a$ to $b$ at the time $t$ in the graph.

Since $H(t)$ is a unitary matrix, if PST happens in the graph from $u$ to $v$, then  the entries in the $u$-th row and the entries in the $v$-th column are all zero except for the $(u,v)$-th entry. That is, the probability starting from $u$ to $v$ is absolutely $1$ which is an idea model for state transferring. The phenomenon of perfect state transfer in quantum communication networks was originally introduced by Bose in \cite{bose}. This work motivated much research interest and many wonderful applications of related works have been found in quantum information processing and cryptography (see \cite{ace,ahar,ahmadi,ber,chris1,chris2,Godsil1,Godsil2,Godsil3,ste,zhan} and the references therein.) In his three papers (\cite{Godsil1,Godsil2,Godsil3}), C. Godsil surveyed the art of PST and provided the close relationship
 between this topic and other researching fields such as algebraic combinatorics, coding theory etc. Ba${\rm \check{s}}$i${\rm \acute{c}}$  \cite{basic2} and Cheung \cite{che} presented a criterion on circulant graphs (the underlying group is cyclic) and cubelike graphs (the underlying group is $\mathbb{F}_2^m$) having PST. Remarkably, Coutinho et al \cite{cg} showed that one can decide whether a graph admits PST in polynomial time with respect to the size of the graph.
In a previous paper \cite{yingfengcao}, we present a characterization on connected simple Cayley graph $\Gamma={\rm Cay}(G,S)$ having PST, we give a unified interpretation of many previously known results. 

However, even though there are a lot of results on PST in literature, it is still quite rare in quantum walks. People are always interested in pursuing more graphs having PST.

Recently, the concept of perfect edge state transfer was introduced in \cite{chengodsil}. Instead of representing quantum state by density matrices associated with vertices, Chen and Godsil \cite{chengodsil} suggested to use the edges of a graph to index the density matrices. A graph is said to have {\it perfect edge state transfer} (PEST, in short) from an edge $(a,b)$ to edge $(c,d)$ if there
exists a complex scalar $\gamma$ with $|\gamma|=1$
 satisfying
$$H(t)(e_a-e_b) =\gamma
(e_c -e_d)$$
for some non-negative time $t$. In terms of the probability distribution,
there is perfect state transfer from $e_a-e_b$ to $e_c - e_d$ at time $t$ if and
only if
\begin{equation}\label{f-318}
 \left|\frac{1}{2}(e_a-e_b)^tH(t)(e_c-e_d)\right|^2=1.
\end{equation}

Edge state transfer shares a lot of properties with vertex state transfer and has potential applications in quantum information computation. Chen and Godsil\cite{chengodsil} provided a sufficient and necessary condition under which a graph has PEST. As applications of this characterization, Chen and Godsil proved the following results:
\begin{itemize}
  \item the path $P_n$ has PEST if and only if $n=3,4$;
  \item the cycle $C_n$ has PEST if and only $n=4$;
  \item if two graphs have PEST at the same time $t$, then so does their Cartesian product;
  \item if a graph has PEST, then its complement also has PEST;
  \item if a graph $\Gamma$ has PEST, then the join of graphs $\Gamma \Box \Delta$ also has PEST for some graphs $\Delta$.
\end{itemize}
In a recent paper \cite{luocao}, we proved that if a graph has PST, then it also has PEST.

However, up to date, there is no general characterization on which graphs have PEST. Even though there are some necessary and sufficient conditions for a graph to have PEST in \cite[Lemma 2.4,Thorem 3.9]{chengodsil} (see also Lemma \ref{chengodsil} in this paper), these conditions are not easy to be verified. Thus, a basic question for us is how to find simple and easily verified characterizations on graphs that have PEST. Especially for some particular graphs such as circulant graphs, cubelike graphs, Hamming graphs, etc. Moreover, more concrete constructions of graphs having PEST are always desirable.

In this paper, we present an explicit and easy to be verified condition for a cubelike graph to have PEST. See Lemma \ref{lem-1} and Theorem \ref{main-thm}. By taking a trace-orthogonal basis of $\mathbb{F}_{2^m}$ (the finite field of size $2^m$) over $\mathbb{F}_2$, one can check the above mentioned conditions just by evaluating the inner product of some vectors, including the calculating of the eigenvalues of the related graphs. Thus, Theorem \ref{main-thm} is a simplification of \cite[Lemma 2.4,Thorem 3.9]{chengodsil}. Moreover, by utilizing a so called ``lift technique", we construct some families of infinite cubelike graphs admitting PEST by involving (semi-)bent functions. The main idea of the technique is as follows: Since for a Cayley graph ${\rm Cay}(G,S)$, its eigenvalues are exactly the Walsh-Hadamard transformation of the characteristic function of the connection set $S$. In order to find graphs whose eigenvalues satisfy the condition of Theorem \ref{main-thm}, we embed a vector space into a larger vector space. Then we can suitably separate the eigenvalues by some planes. Moreover, in $\mathbb{F}_2^m$, every subset is one-to-one corresponding to a Boolean function, namely, the characteristic function. As a result, the eigenvalues of the graphs are represented by the Fourier spectra of the Boolean functions. By employing some specific Boolean functions, we can control the eigenvalues of graphs. Following this approach, we show that every bent function, and some semi-bent functions as well, can lead to a cubelike graph having PEST. See Theorem \ref{thm-4} and Theorem \ref{thm-semi}. Some concrete constructions are provided in Sect. \ref{examples}.

 We use $\mathbb{N}$, $\mathbb{Z}$, $\mathbb{Q}$, $\mathbb{R}$, and $\mathbb{C}$ to stand for the set of non-negative integers, the integers ring, rational numbers field, real numbers field and complex numbers field, respectively.

\section{Preliminaries}

In this section, we give some notation and definitions which are needed in our discussion.

\subsection{Characters group of an abelian group}
Let $G$ be a finite abelian group. It is well-known that $G$ can be decomposed as a direct product of cyclic groups:
\begin{eqnarray*}
  G=\mathbb{Z}_{n_1}\otimes \cdots\otimes \mathbb{Z}_{n_r}\ \ (n_s\geq 2),
\end{eqnarray*}
where $\mathbb{Z}_m=(\mathbb{Z}/m\mathbb{Z},+)$ is a cyclic group of order $m$.

For every $x=(x_1,\cdots,x_r)\in G$, $(x_s\in \mathbb{Z}_{n_s})$, the mapping
\begin{equation*}
   \chi_x:G\rightarrow \mathbb{C}, \chi_x(g)=\prod_{s=1}^r\omega_{n_s}^{x_sg_s} \ (\mbox{ for $g=(g_1,\cdots,g_r)\in G$})
\end{equation*}
is a character of $G$, where $\omega_{n_s}=\exp(2\pi i/n_s)$ is a primitive $n_s$-th root of unity in $\mathbb{C}$.
 For $x,y\in G$, we define $\chi_x\chi_y: G\rightarrow \mathbb{C}$ by
 \begin{equation*}
   \forall g\in G, (\chi_x\chi_y)(g)=\chi_x(g)\chi_y(g).
 \end{equation*}
 Then it can be shown that $\hat{G}=\{\chi_x| x \in G\}$ form a group which we call it the dual group or the character group of $G$. Moreover, the mapping $G\rightarrow \hat{G}, x\mapsto \chi_x$ is an isomorphism of groups. Furthermore, it is easy to see that
  $$\chi_x(g)=\chi_g(x) \mbox{ for all $x,g\in G$}.$$

\subsection{Trace-orthogonal basis}
Let $\mathbb{F}$ be field, $V_1,V_2$ be linear spaces over $\mathbb{F}$. A bilinear form over $\mathbb{F}$ is a two-variable function $B(x,y)$ on $V_1\times V_2$ satisfying:

(1) $B(ax+by,z)=aB(x,z)+bB(y,z), \forall x,y\in V_1, z\in V_2, a,b\in \mathbb{F}$;

(2) $B(x,ay+bz)=aB(x,z)+bB(y,z), \forall x\in V_1, y,z\in V_2, a,b\in \mathbb{F}$.

Let $V_1=V_2=\mathbb{F}_{2^m}$ which is viewed as a linear space over $\mathbb{F}_2$. Then it is easily seen that for $x,y\in \mathbb{F}_{2^m}$, $B(x,y):={\rm Tr}(xy)$ defines a bilinear form on $\mathbb{F}_{2^m}$, where ${\rm Tr}(\cdot)$ is the trace operator. A {\it trace-orthogonal basis} for $\mathbb{F}_{2^m}$ over $\mathbb{F}_2$ is a basis $\{\alpha_1,\cdots,\alpha_m\}$ satisfying:

\begin{equation*}
  {\rm Tr}(\alpha_i\alpha_j)=\left\{\begin{array}{cc}
                                      1 & \mbox{ if $i=j$}, \\
                                      0 & \mbox{ if $i\neq j$}.
                                    \end{array}
  \right.
\end{equation*}
It is known that for every finite field of even characteristic, there always exists a trace-orthogonal basis, see \cite{SL}. Let $x,y\in \mathbb{F}_{2^m}$ and suppose that $x=\sum_{i=1}^mx_i\alpha_i$, $y=\sum_{i=1}^my_i\alpha_i$, $x_i,y_i\in \mathbb{F}_2, i=1,\cdots,m$. Then
\begin{equation}\label{trace}
  {\rm Tr}(xy)=\sum_{i,j}x_iy_j{\rm Tr}(\alpha_i\alpha_j)=\sum_{i=1}^mx_iy_i.
\end{equation}

\section{A characterization of cubelike graphs having PEST}

Let $G$ be an abelian group with order $n$. Let $S$ be a subset of $G$ with $|S|=s\geq 1$, $0\not\in S=-S:=\{-z: z\in S\}$ and $G=\langle S\rangle$. Suppose that $\Gamma={\rm Cay}(G,S)$ is the Cayley graph with the connection set $S$. Take a matrix $P=\frac{1}{\sqrt{n}}(\chi_g(h))_{g,h\in G}$ and projection $E_x=p_xp_x^*$, where $p_x$ is the $x$-th column of $P$. Then the adjacency matrix $A$ of $\Gamma$ has the following spectral decomposition:
\begin{equation}\label{f-8}
  A=\sum_{g\in G}\lambda_gE_g,
\end{equation}
where
\begin{equation}\label{f-e6}
  E_x=p_xp_x^*=\frac{1}{n}(\chi_x(g-h))_{g,h\in G},
\end{equation}
and
\begin{equation}\label{f-9}
 \lambda_g=\sum_{s\in S}\chi_g(s), g\in G.
\end{equation}
Meanwhile, the transfer matrix $H(t)$ has the following decomposition:
\begin{equation}\label{f-10}
H(t)=\sum_{g\in G}\exp(\imath \lambda_g t)E_g.
\end{equation}
Thus we have, for every pair $u,v\in G$,
\begin{equation}\label{f-entry}
  H(t)_{u,v}=\sum_{g\in G}\exp(\imath \lambda_g t)(E_g)_{u,v}=\frac{1}{n}\sum_{g\in G}\exp(\imath \lambda_g t)\chi_g(u-v).
\end{equation}
Therefore,
\begin{eqnarray}\label{f-HE}
\nonumber&&\frac{1}{2}(e_c-e_d)^tH(t)(e_a-e_b)\\
\nonumber&=&\frac{1}{2}\left(H(t)_{c,a}-H(t)_{c,b}-H(t)_{d,a}+H(t)_{d,b}\right)\\
\nonumber&=&\frac{1}{2n}\sum_{x\in G}\exp(\imath t \lambda_x)\left(\chi_x(c-a)-\chi_x(c-b)-\chi_x(d-a)+\chi_x(d-b)\right)\\
\nonumber&=&\frac{2}{n}\sum_{x\in G}\exp(\imath t \lambda_x)\frac{\chi_x(c)-\chi_x(d)}{2}\frac{\overline{\chi_x(a)-\chi_x(b)}}{2}\\
&=&\frac{2}{n}\sum_{x\in G}\exp(\imath t \lambda_x)\chi_x(c-a)\frac{1-\chi_x(d-c)}{2}\frac{1-\overline{\chi_x(b-a)}}{2}.
\end{eqnarray}
In the sequel, we always assume that $(a,b), (c,d)$ are edges of the concerned graph. To avoid the trivial case, we assume that $b\neq c, a\neq d$.

Below, we let $G=(\mathbb{F}_{2^m}, +)$ be the additive group of the finite field $\mathbb{F}_{2^m}$. The character group of $G$ is
\begin{eqnarray*}
  \hat{G}=\widehat{(\mathbb{F}_q, +)}=\{\chi_z:z\in \mathbb{F}_q\},
\end{eqnarray*}
where for $g,z\in \mathbb{F}_q$, $\chi_z(g)=(-1)^{{\rm Tr}(zg)}$, and ${\rm Tr}: \mathbb{F}_q\rightarrow \mathbb{F}_2$ is the trace operator.
It is easy to see that $G\simeq \mathbb{F}_2^m$ which is an $m$-dimensional linear space over $\mathbb{F}_2$. If we view $\mathbb{F}_2^m$ as the additive group of the finite field $\mathbb{F}_{q}$ with $q=2^m$, then
\begin{eqnarray*}
  \hat{G}=\hat{\mathbb{F}_2^m}=\{\chi_z: z\in \mathbb{F}_2^m\},
\end{eqnarray*}
where for $g=(g_1,\cdots,g_m), z=(z_1,\cdots,z_m)\in \mathbb{F}_2^m$,
\begin{eqnarray*}
  \chi_z(g)=(-1)^{z\cdot g},\quad z\cdot g=\sum_{j=1}^mz_jg_j\in \mathbb{F}_2.
\end{eqnarray*}
The above two representations of additive characters of $\mathbb{F}_{2^m}$ can be unified by taking a trace-orthogonal
basis over $\mathbb{F}_{2^m}/\mathbb{F}_2$ (see (\ref{trace}).

In the finite field $\mathbb{F}_{2^m}$, it is known that the number of $x\in \mathbb{F}_{2^m}$ such that $\chi_x(\alpha)=1$ is $2^{m-1}$, where $\alpha\neq 0$ is any nonzero element in $\mathbb{F}_q$. Thus, the number of nonzero terms in the RHS of (\ref{f-HE}) is upper bounded by $2^{m-1}=n/2$. Moreover, the upper bound is achieved if and only if $d+c=b+a$. Furthermore, the absolute value of each term in the summation is less than or equal to 1. Therefore, we deduce that
\begin{equation*}
  \left|\frac{1}{2}(e_c-e_d)^tH(t)(e_a-e_b)\right|^2=1
\end{equation*}
if and only if the following conditions hold:

(1) $d+c=b+a$;

(2) for all $x\not\in {\ker}({\rm Tr}((b+a)X))$, $\exp(\imath t \lambda_x)\chi_x(c+a)$ is a constant.

Take an element $x_0\in \mathbb{F}_{2^m}$ satisfying ${\rm Tr}((b+a)x_0)=1$ and ${\rm Tr}((c+a)x_0)=0$. Then (2) is equivalent to $\exp(\imath t(\lambda_{x_0}-\lambda_x))=\chi_{x}(c+a)$ for all $x\in \mathbb{F}_{2^m}$ satisfying ${\rm Tr}((b+a)x)=1$.

Thus we have the following preliminary result:

\begin{lem}\label{lem-1}Let $\Gamma={\rm Cay}(\mathbb{F}_{2^m},S)$ be a cubelike graph over $\mathbb{F}_{2^m}$ with $|S|=s$. For $a,b,c,d\in \mathbb{F}_{2^m}$, $\Gamma$ has PEST between $(a,b)$ and $(c,d)$ if and only if

(1) $a+b+c+d=0$;

(2) Let $x_0\in \mathbb{F}_{2^m}$ such that ${\rm Tr}((b+a)x_0)=1$, ${\rm Tr}((c+a)x_0)=0$. Then $\exp(\imath t (\lambda_{x_0}-\lambda_x))={\chi_x(c+a)}$ for all $x\in \mathbb{F}_{2^m}$ satisfying $ {\rm Tr}((b+a)x)=1$.\end{lem}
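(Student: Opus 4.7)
The plan is to take identity (\ref{f-HE}) as the starting point and determine exactly when its modulus equals $1$. Two features of the character sum drive the argument: an individual summand vanishes unless certain trace conditions on $x$ are met, which forces the counting condition (1); and once (1) holds, the remaining nonzero summands share a common modulus and the equality $|\cdot| = 1$ is achieved only when they also agree in complex phase, which forces (2). The whole proof is thus a careful extraction of these two constraints from (\ref{f-HE}).

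For the first step I would note that since each additive character $\chi_x$ of $\mathbb{F}_{2^m}$ takes values in $\{\pm 1\}$, every factor $(1-\chi_x(\alpha))/2$ appearing in (\ref{f-HE}) lies in $\{0,1\}$ and equals $1$ exactly when ${\rm Tr}(\alpha x) = 1$. Hence the $x$-summand (with its $2/n$ prefactor) is nonzero iff ${\rm Tr}((b+a)x) = {\rm Tr}((d+c)x) = 1$, and in that case has modulus exactly $2/n$. Since $|\{x \in \mathbb{F}_{2^m} : {\rm Tr}(\alpha x) = 1\}| = n/2$ for any nonzero $\alpha$, the count of nonzero summands is at most $n/2$, with equality precisely when the two defining linear conditions coincide, i.e., $b+a = d+c$; in characteristic $2$ this is exactly condition (1). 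The triangle inequality then bounds the sum in modulus by $(n/2)\cdot(2/n) = 1$, with equality requiring both (1) and exact phase alignment of every nonzero term.

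For the second step, assume (1); every nonzero summand then reduces to $\frac{2}{n}\exp(\imath t\lambda_x)\chi_x(c+a)$, and the equality $|\cdots|=1$ forces all these $n/2$ equimodular terms to coincide as complex numbers. The existence of the reference element $x_0$ in condition (2) is the only nontrivial bookkeeping point: since $a\neq b$ we have $b+a\neq 0$; either $c=a$, in which case $c+a=0$ and the second trace condition on $x_0$ is automatic, or $c\neq a$ and also $c\neq b$, making $c+a$ and $b+a$ two distinct nonzero elements and hence $\mathbb{F}_2$-linearly independent, whereby the simultaneous system has a solution. With such an $x_0$ fixed, normalising each nonzero term by the $x_0$-term and using $\chi_{x_0}(c+a)=1$ yields exactly $\exp(\imath t(\lambda_{x_0}-\lambda_x)) = \chi_x(c+a)$ for every $x$ with ${\rm Tr}((b+a)x)=1$, which is condition (2). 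The converse direction is immediate: under (1) the sum has $n/2$ equimodular nonzero terms, and under (2) these terms all agree, so the sum has modulus $1$.

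The main obstacle, such as it is, is the casework around degenerate edge configurations: the standing assumption $b\neq c$, $a\neq d$ is used precisely to ensure the character-sum picture remains nondegenerate, and uniform treatment of the case $c=a$ (which by (1) forces $d=b$, so the two edges coincide) requires noticing that $c+a=0$ makes the second trace condition trivial. Once this is in hand everything reduces to the clean character-sum bookkeeping above, and the lemma follows from the two equivalences established in steps one and two.
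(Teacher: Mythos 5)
Your proposal is correct and follows essentially the same route as the paper: it starts from identity (\ref{f-HE}), observes that each factor $(1-\chi_x(\alpha))/2$ is a $\{0,1\}$-indicator of ${\rm Tr}(\alpha x)=1$, counts the nonzero equimodular summands to force $a+b+c+d=0$ via the triangle inequality, and then extracts the phase-alignment condition by normalising against the $x_0$-term. Your additional care about the existence of $x_0$ (linear independence of $b+a$ and $c+a$ under the standing assumption $b\neq c$) and the degenerate case $c=a$ is a small but welcome tightening of what the paper leaves implicit.
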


Consequently, we have
\begin{cor}\label{cor-1}Let $\Gamma={\rm Cay}(\mathbb{F}_{2^m},S)$ be a cubelike graph over $\mathbb{F}_{2^m}$ with $|S|=s$. For $a,b,c,d,\alpha\in \mathbb{F}_{2^m}$, $\Gamma$ has PEST between $(a,b)$ and $(c,d)$ if and only if $\Gamma$ has PEST between $(a+\alpha,b+\alpha)$ and $(c+\alpha,d+\alpha)$.
\end{cor}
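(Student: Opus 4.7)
The plan is to apply Lemma~\ref{lem-1} directly and observe that both of its conditions depend only on the pairwise sums $a+b$, $c+d$, $a+c$, and $b+d$, all of which are invariant under simultaneous translation by $\alpha$. This is because we work in characteristic~$2$: $(a+\alpha)+(b+\alpha) = a+b$, and similarly for the other pairs. So the proof is essentially a one-line verification once we unpack the characterization.

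Concretely, I would first check that condition~(1) of Lemma~\ref{lem-1} is preserved, since $(a+\alpha)+(b+\alpha)+(c+\alpha)+(d+\alpha) = a+b+c+d+4\alpha = a+b+c+d$ in $\mathbb{F}_{2^m}$. Next, I would note that in condition~(2), the relevant quantities $b+a$ and $c+a$ are also left unchanged by the shift, so the very same element $x_0 \in \mathbb{F}_{2^m}$ that works for the original edges $(a,b), (c,d)$ also witnesses the translated case $(a+\alpha,b+\alpha), (c+\alpha,d+\alpha)$; the equation $\exp(\imath t(\lambda_{x_0}-\lambda_x)) = \chi_x(c+a)$ is literally the same. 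Since the characterization in Lemma~\ref{lem-1} is an ``if and only if,'' this yields the corollary in both directions simultaneously. No step is genuinely obstructive; the only thing to be careful about is keeping track of the characteristic-$2$ cancellations to confirm that the $\alpha$'s really do drop out of every quantity that appears.

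Conceptually, one may also see this as a consequence of the fact that for any Cayley graph $\mathrm{Cay}(G,S)$ on an abelian group, the translation $\tau_\alpha : v \mapsto v+\alpha$ is a graph automorphism, and graph automorphisms commute with the transfer matrix $H(t) = \exp(\imath t A)$. Hence if $H(t)(e_a - e_b) = \gamma(e_c - e_d)$, applying $\tau_\alpha$ (which permutes the standard basis) gives $H(t)(e_{a+\alpha} - e_{b+\alpha}) = \gamma(e_{c+\alpha} - e_{d+\alpha})$ with the same scalar $\gamma$, and conversely. I would likely present the short computational proof via Lemma~\ref{lem-1} as the main argument, and mention the automorphism viewpoint as a remark to motivate why the statement is natural.
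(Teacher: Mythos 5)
Your proposal is correct and matches the paper's (implicit) argument: the paper derives Corollary~\ref{cor-1} directly from Lemma~\ref{lem-1} with no further work, exactly as you do, since both conditions depend only on the sums $a+b$ and $c+a$, which are invariant under simultaneous translation in characteristic $2$. Your additional remark about the translation automorphism commuting with $H(t)$ is a valid and natural complement, though not part of the paper's reasoning.
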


Define two subsets in $\mathbb{F}_{2^m}$ by
\begin{equation}\label{f-e2}
\begin{array}{l}
                                   \Omega_+=\{x\in \mathbb{F}_{2^m} : {\rm Tr}((c+a)x)=0,{\rm Tr}((b+a)x)=1\}, \\
                                   \Omega_-=\{x\in \mathbb{F}_{2^m}: {\rm Tr}((c+a)x)=1, {\rm Tr}((b+a)x)=1\}.
                                 \end{array}
\end{equation}
It is easily seen that
\begin{equation*}
  \Omega_+\cap \Omega_-=\emptyset, \ \ \Omega_+\cup \Omega_-=\{x\in \mathbb{F}_{2^m}: {\rm Tr}((b+a)x)=1\}.
\end{equation*}

If we take a trace-orthogonal basis of $\mathbb{F}_{2^m}$ over $\mathbb{F}_2$, then (\ref{f-e2}) has the following form:
\begin{equation}\label{f-e2'}
\begin{array}{l}
                                   \Omega_+=\{\mathbf{x}=(x_1\cdots x_m)\in \mathbb{F}_{2}^m : (\mathbf{c+a})\cdot \mathbf{x}=0,(\mathbf{b+a})\cdot \mathbf{x}=1\}, \\
                                   \Omega_-=\{\mathbf{x}=(x_1\cdots x_m)\in \mathbb{F}_{2}^m: (\mathbf{c+a}) \cdot \mathbf{x}=1, (\mathbf{b+a})\cdot \mathbf{x}=1\}.
                                 \end{array}
\end{equation}
where the ``$\cdot$" is the standard inner product of two vectors, and $\mathbf{a},\mathbf{b},\mathbf{c},\mathbf{d}$ are the vectors corresponding to $a,b,c,d$, respectively. Thus, in the very beginning, we can use the inner product to define the characters of $\mathbb{F}_{2}^m$ instead of using the trace mapping. This will make things easier for us. Only in the case of the isomorphism $\mathbb{F}_2^m\cong \mathbb{F}_{2^m}$ as linear spaces over $\mathbb{F}_2$ is concerned, we need to use the trace-orthogonal basis.

Recall that the $2$-adic exponential valuation of rational numbers which is a mapping defined by
\[
  v_2:\mathbb{Q}\rightarrow \mathbb{Z}\cup \{\infty\}, v_2(0)=\infty, v_2(2^\ell\frac{a}{b})=\ell, \mbox{ where $a,b,\ell\in \mathbb{Z}$ and $2\not |ab$}.
\]
We assume that $\infty+\infty=\infty+\ell=\infty$ and $\infty>\ell$ for any $\ell \in \mathbb{Z}$. Then $v_2$ has the following properties. For $\beta,\beta'\in \mathbb{Q}$,

(P1) $v_2(\beta\beta')=v_2(\beta)+v_2(\beta')$;

(P2) $v_2(\beta+\beta')\geq \min(v_2(\beta),v_2(\beta'))$ and the equality holds if $v_2(\beta)\neq v_2(\beta')$.

Based on the previous preparations, we present our main result as follows.

\begin{thm}\label{main-thm}Let $\Gamma={\rm Cay}(\mathbb{F}_{2^m};S)$ be a Cayley graph over $\mathbb{F}_{2^m}$ with $|S|=s$. For $a,b,c,d\in \mathbb{F}_{2^m}$, $\Gamma$ admits PEST between $(a,b)$ and $(c,d)$ if and only if the following two conditions hold:

(1) $a+b+c+d=0$;

(2) Let $x_0\in \Omega_+$. Then for all $x\in \Omega_-$, $v_2(\lambda_{x_0}-\lambda_x)$ are the same number, say $\rho$. Moreover, for every $y\in \Omega_+$, we have $v_2(\lambda_{x_0}-\lambda_y)\geq \rho+1$, where $\Omega_+,\Omega_-$ are defined by (\ref{f-e2}).

Furthermore, if the conditions (1), (2) are satisfied, then the time $t$ at which the graph has PEST is $t=\frac{(2u+1)\pi}{M}$ for some integers $u$, where $M=\gcd(\lambda_{x_0}-\lambda_x: x_0\neq x\in \mathbb{F}_{2^m}, {\rm Tr}((a+b)x)=1)$.  \end{thm}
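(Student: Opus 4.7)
The plan is to reduce the theorem to Lemma \ref{lem-1} by translating its exponential condition into a $2$-adic divisibility condition. The crucial observation is that for a cubelike graph every eigenvalue $\lambda_x=\sum_{s\in S}(-1)^{{\rm Tr}(xs)}$ lies in $\mathbb{Z}$, so every difference $\lambda_{x_0}-\lambda_x$ is an integer and $v_2$ applies directly. Since $\chi_x(c+a)=+1$ exactly when $x\in\Omega_+$ and $\chi_x(c+a)=-1$ exactly when $x\in\Omega_-$, writing $t=s\pi$ turns condition (2) of Lemma \ref{lem-1} into the arithmetic statement: $s(\lambda_{x_0}-\lambda_x)$ is an \emph{even} integer for every $x\in\Omega_+\setminus\{x_0\}$ and an \emph{odd} integer for every $x\in\Omega_-$. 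Condition (1) of the theorem is already condition (1) of Lemma \ref{lem-1}, so only condition (2) needs real work.

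For necessity, assume PEST occurs at $t=s\pi$. Fix any $x_1\in\Omega_-$ (the degenerate case $\Omega_-=\emptyset$ forces $c+a=0$, whence $c=a$ and $d=b$ via (1), giving a trivial self-transfer at $t=0$). Then $s(\lambda_{x_0}-\lambda_{x_1})=N_1$ is an odd integer, so $s=N_1/(\lambda_{x_0}-\lambda_{x_1})$, and the expression
\begin{equation*}
s(\lambda_{x_0}-\lambda_x)=N_1\cdot\frac{\lambda_{x_0}-\lambda_x}{\lambda_{x_0}-\lambda_{x_1}}
\end{equation*}
must be an odd integer for every other $x\in\Omega_-$ and an even integer for every $y\in\Omega_+\setminus\{x_0\}$. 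Applying property (P1) with $v_2(N_1)=0$ immediately yields $v_2(\lambda_{x_0}-\lambda_x)=v_2(\lambda_{x_0}-\lambda_{x_1})$ for all $x\in\Omega_-$ (this common value being the $\rho$ of condition (2)) and $v_2(\lambda_{x_0}-\lambda_y)\ge\rho+1$ for all $y\in\Omega_+\setminus\{x_0\}$.

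For sufficiency and the explicit form of $t$, set $M=\gcd(\lambda_{x_0}-\lambda_x:x_0\neq x,\ {\rm Tr}((a+b)x)=1)$, so condition (2) gives $v_2(M)=\rho$. Then $(\lambda_{x_0}-\lambda_x)/M$ is an odd integer for each $x\in\Omega_-$ and an even integer for each $y\in\Omega_+\setminus\{x_0\}$. Choosing $t=(2u+1)\pi/M$ makes $t(\lambda_{x_0}-\lambda_x)/\pi=(2u+1)(\lambda_{x_0}-\lambda_x)/M$ have the right parity in each case, verifying Lemma \ref{lem-1}(2) and so producing PEST. Conversely, any admissible $t$ is a rational multiple of $\pi$ (from $s(\lambda_{x_0}-\lambda_{x_1})\in\mathbb{Z}$ with $\lambda_{x_0}-\lambda_{x_1}\in\mathbb{Z}\setminus\{0\}$); writing $t/\pi=p/q$ in lowest terms, the integrality requirement forces $q\mid M$, and then the oddness requirement on $\Omega_-$ forces the numerator to be odd, giving exactly the stated form $t=(2u+1)\pi/M$.

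I expect the main difficulty to be essentially bookkeeping: tracking the asymmetric roles of $\Omega_+$ (strict inequality $v_2\ge\rho+1$) and $\Omega_-$ (equality $v_2=\rho$) uniformly through both directions, and justifying that any admissible PEST time must be a rational multiple of $\pi$. Nothing beyond Lemma \ref{lem-1} and the valuation identities (P1)--(P2) recalled just before the theorem statement is required.
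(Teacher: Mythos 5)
Your proposal is correct and follows essentially the same route as the paper: reduce to Lemma \ref{lem-1}, rewrite the exponential condition as a parity/integrality condition on $t(\lambda_{x_0}-\lambda_x)/\pi$, and apply the $2$-adic valuation properties (P1)--(P2) to get the constant $\rho$ on $\Omega_-$ and the bound $\rho+1$ on $\Omega_+$ (the paper normalizes $t=2\pi T$ where you use $t=s\pi$, an immaterial difference). Your write-up is in fact slightly more complete than the paper's, which defers the sufficiency direction and the derivation of $t=(2u+1)\pi/M$ to the argument of \cite[Theorem 2.4]{yingfengcao}, whereas you supply the gcd argument explicitly.
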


\begin{proof}The main idea of the proof is an analogue of \cite[Theorem 2.4]{yingfengcao}. If $\Gamma$ has PEST at time $t$, then for $x,x'\in \Omega_-$, we have
\begin{equation}\label{f-e7}
 \exp(\imath t (\lambda_{x_0}-\lambda_x))={\chi_x(c+a)}, \ \ \exp(\imath t (\lambda_{x_0}-\lambda_{x'}))={\chi_{x'}(c+a)}.
\end{equation}
Write $t=2\pi T$. Then (\ref{f-e7}) becomes
\begin{equation}\label{f-e8}
  T(\lambda_{x_0}-\lambda_x)-\frac{1}{2}\in \mathbb{Z}, T(\lambda_{x_0}-\lambda_{x'})-\frac{1}{2}\in \mathbb{Z}.
\end{equation}
Thus, $T\in \mathbb{Q}$ and $T\neq 0$. Therefore, $v_2(T(\lambda_{x_0}-\lambda_x))=v_2(T(\lambda_{x_0}-\lambda_{x'}))=-1$ and then $v_2(\lambda_{x_0}-\lambda_x)=v_2(\lambda_{x_0}-\lambda_{x'})=-1-v_2(T).$ That is, for all $x\in \Omega_-$, $v_2(\lambda_{x_0}-\lambda_x)$ is a constant. Say, $\rho$.

For every $y\in \Omega_+$, condition (2) of Lemma \ref{lem-1} means that
\begin{equation*}
  \exp(\imath 2\pi T(\lambda_{x_0}-\lambda_y))=\chi_y(c+a)=(-1)^{{\rm Tr}((c+a)y)}=1.
\end{equation*}
Thus, $T(\lambda_{x_0}-\lambda_y)\in \mathbb{Z}$. Then $v_2(T(\lambda_{x_0}-\lambda_y))\geq 0$, i.e, $v_2(\lambda_{x_0}-\lambda_y)\geq -v_2(T)=\rho+1$.

Conversely, if for all $x\in \Omega_-$, $v_2(\lambda_{x_0}-\lambda_x)=\rho$ and for every $y\in \Omega_+$, $v_2(\lambda_{x_0}-\lambda_y)\geq \rho+1$, then
\begin{equation*}
  \exp(\imath t (\lambda_{x_0}-\lambda_x))={\chi_x(c+a)}\Leftrightarrow T(\lambda_{x_0}-\lambda_x)-\frac{1}{2}\in \mathbb{Z},
\end{equation*}
and
\begin{equation*}
  \exp(\imath t (\lambda_{x_0}-\lambda_y))={\chi_y(c+a)}\Leftrightarrow T(\lambda_{x_0}-\lambda_y)\in \mathbb{Z}.
\end{equation*}
Using the same argument as \cite[Theorem 2.4]{yingfengcao}, we get the desired result.\qed
\end{proof}

For any two states $e_a-e_b$ and $e_c-e_d$, they are termed {\it strongly cospectral} in $\Gamma$
if and only if $E_x(e_a -e_b) = \pm E_x(e_c-e_d)$ holds for  all $x\in \mathbb{F}_{2^m}$.

Now we let $\bigwedge^+_{ab,cd} $
denote the set of eigenvalues such that
$$E_x(e_a- e_b) = E_x(e_c - e_d)$$
and let $\bigwedge^-_{ab,cd} $ denote the set of eigenvalues such that
$$E_x(e_a- e_b) =- E_x(e_c- e_d).$$
It is easy to see that
$\bigwedge_{a,b}=\bigwedge_{c,d}=\bigwedge^+_{ab,cd}\cup \bigwedge^-_{ab,cd}, \bigwedge^+_{ab,cd}\cap \bigwedge^-_{ab,cd}=\emptyset,$
where $\bigwedge_{a,b}$ (resp. $\bigwedge_{c,d}$) is the set of eigenvalues $\lambda_x$ such that $E_x(e_a-e_b)\neq 0$ (resp. $E_x(e_c-e_d)\neq 0$).

Using strongly cospectrality, Chen and Godsil \cite{chengodsil} derived a characterization of
perfect edge state transfer as follows.

\begin{lem}\cite[Lemma 2.4]{chengodsil}\label{chengodsil} Let $\Gamma=(V,E)$ be a graph and $(a, b), (c, d) \in  E$. Perfect edge
state transfer between $(a,b)$ and $(c,d)$ occurs at time $t$ if and only
if all of the following conditions hold.

(a) Edge states $e_a -e_b$ and $e_c -e_d$ are strongly cospectral. Let
$\lambda_0 \in \bigwedge^+_{ab,cd}$.

(b) For all $\lambda_x\in \bigwedge^+_{ab,cd}$,
there is an integer $k$ such that $ t(\lambda_0-\lambda_x) =2k\pi$.

(c) For all $\lambda_x\in \bigwedge^-_{ab,cd}$,
there is an integer $k$ such that $ t(\lambda_0-\lambda_x) =(2k+1)\pi$.
\end{lem}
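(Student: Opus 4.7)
The plan is to prove the lemma by applying the spectral projections $E_x$ of the adjacency matrix $A$ to the defining identity $H(t)(e_a-e_b)=\gamma(e_c-e_d)$ and extracting the conditions eigenspace by eigenspace. Since $A$ is a real symmetric matrix, we may write $A=\sum_x \lambda_x E_x$ with distinct real eigenvalues $\lambda_x$ and real orthogonal projections $E_x$ satisfying $\sum_x E_x=I$. Then $H(t)=\sum_x e^{\imath \lambda_x t}E_x$, so multiplying both sides of the PEST identity by a fixed $E_y$ and using that $E_y E_x=\delta_{xy}E_x$ yields the pointwise equation
\begin{equation*}
 e^{\imath \lambda_y t}\,E_y(e_a-e_b)=\gamma\,E_y(e_c-e_d)
\end{equation*}
for every $y$. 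This is the single identity that drives both directions.

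First I would handle the forward direction. If $E_y(e_a-e_b)=0$, then since $\gamma\neq 0$ the equation forces $E_y(e_c-e_d)=0$ as well, and symmetrically. Hence $\bigwedge_{a,b}=\bigwedge_{c,d}$. When $E_y(e_a-e_b)\neq 0$, both sides of the displayed identity are real vectors (as $E_y$ has real entries and the standard basis vectors are real), so the ratio $e^{\imath \lambda_y t}/\gamma$ must be real; being of unit modulus, it equals $\pm 1$. This gives $E_y(e_c-e_d)=\pm E_y(e_a-e_b)$, which is exactly strong cospectrality (a). Fixing $\lambda_0\in \bigwedge^+_{ab,cd}$ gives $\gamma=e^{\imath \lambda_0 t}$. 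Then $\lambda_y\in \bigwedge^+_{ab,cd}$ forces $e^{\imath(\lambda_y-\lambda_0)t}=1$, i.e.\ $t(\lambda_0-\lambda_y)\in 2\pi\mathbb{Z}$, establishing (b); and $\lambda_y\in \bigwedge^-_{ab,cd}$ forces $e^{\imath(\lambda_y-\lambda_0)t}=-1$, giving (c).

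For the converse, assume (a), (b), (c) and set $\gamma:=e^{\imath \lambda_0 t}$, which has modulus $1$. I would compute
\begin{equation*}
 H(t)(e_a-e_b)=\sum_{y}e^{\imath \lambda_y t}E_y(e_a-e_b)
\end{equation*}
and split the sum according to whether $\lambda_y$ lies in $\bigwedge^+_{ab,cd}$, in $\bigwedge^-_{ab,cd}$, or outside $\bigwedge_{a,b}$. In the last case the term vanishes. In the first case (b) gives $e^{\imath \lambda_y t}=\gamma$ and (a) gives $E_y(e_a-e_b)=E_y(e_c-e_d)$, producing the term $\gamma E_y(e_c-e_d)$. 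In the second case (c) gives $e^{\imath \lambda_y t}=-\gamma$ and (a) gives $E_y(e_a-e_b)=-E_y(e_c-e_d)$, and the two signs cancel to again produce $\gamma E_y(e_c-e_d)$. Summing and using $\sum_y E_y(e_c-e_d)=e_c-e_d$ (with the terms from $y\notin \bigwedge_{c,d}=\bigwedge_{a,b}$ contributing zero) yields $H(t)(e_a-e_b)=\gamma(e_c-e_d)$ with $|\gamma|=1$, i.e.\ PEST between $(a,b)$ and $(c,d)$ at time $t$.

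The only delicate point I expect is the reality argument that forces $e^{\imath \lambda_y t}/\gamma\in\{\pm 1\}$ when $E_y(e_a-e_b)\neq 0$: it depends on $A$ being real symmetric so that the spectral idempotents $E_y$ preserve the real subspace, and on $e_a-e_b$, $e_c-e_d$ being nonzero real vectors. Once that is pinned down, the rest is a clean rearrangement of the spectral decomposition, so no estimates or number-theoretic inputs are needed.
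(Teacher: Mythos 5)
Your proof is correct, but note that the paper itself never proves this lemma: it is imported verbatim from Chen and Godsil \cite[Lemma 2.4]{chengodsil}, so there is no internal proof to compare against. Your argument --- applying each spectral idempotent $E_y$ to $H(t)(e_a-e_b)=\gamma(e_c-e_d)$, using $E_yE_x=\delta_{xy}E_x$, the reality of the $E_y$ (since $A$ is real symmetric), and the unit modulus of $\gamma e^{-\imath \lambda_y t}$ to force the sign $\pm 1$ --- is the standard proof of this result, and it is also the same machinery the paper deploys right after the lemma, where it specializes $E_x=\frac{1}{n}\left(\chi_x(g-h)\right)_{g,h}$ to rederive Theorem \ref{main-thm} for cubelike graphs. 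All steps check out: the forward direction correctly extracts $\bigwedge_{a,b}=\bigwedge_{c,d}$, strong cospectrality, $\gamma=e^{\imath\lambda_0 t}$, and conditions (b), (c); the converse resummation over $\bigwedge^+_{ab,cd}$, $\bigwedge^-_{ab,cd}$, and the complement of the support is clean. One small point, inherited from the lemma's statement rather than a gap in your proof: fixing $\lambda_0\in\bigwedge^+_{ab,cd}$ presupposes $\bigwedge^+_{ab,cd}\neq\emptyset$, which can fail only in the degenerate case $e_a-e_b=-(e_c-e_d)$, i.e.\ $(c,d)=(b,a)$ as an edge, which the intended reading of ``transfer between two edges'' excludes.
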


One can use Lemma \ref{chengodsil} to give an alternative proof of Theorem \ref{main-thm}. Indeed, for the cubelike graph, we have $E_x=\frac{1}{n}(\chi_x(g-h))_{g,h\in \mathbb{F}_{2^m}}$. Thus
$E_x(e_a- e_b) = \pm E_x(e_c - e_d)$ if and only if ${\rm Tr}((b+a)x)={\rm Tr}((c+d)x)$ for all $x\in \mathbb{F}_{2^m}$. Therefore,  $e_a-e_b$ and $e_c-e_d$  are strongly cospectral in $\Gamma$
if and only if $a+b+c+d=0$. Moreover, let $t=2\pi T$. Then condition (b) in Lemma \ref{chengodsil} is equivalent to $T(\lambda_{x_0}-\lambda_x)\in \mathbb{Z}$ for $x\in \Omega_+$. Meanwhile, the condition (c) is equivalent to $T(\lambda_{x_0}-\lambda_x)\in \frac{1}{2}+\mathbb{Z}$ for $x\in \Omega_-$, where $\Omega_+, \Omega_-$ are defined in (\ref{f-e2}).

\section{A lower bound for the time at which a cubelike graph has PEST}

In view of Corollary \ref{cor-1}, we can assume that the initial state edge is $(0,b)$, where $b\neq 0$. Moreover, we have the following result.

\begin{lem}\label{lem-2}Let $\Gamma={\rm Cay}(\mathbb{F}_{2^m};S)$ be a cubelike graph over $\mathbb{F}_{2^m}$ with $|S|=s$. Let $b,c,d\in \mathbb{F}_{2^m}$ and $b\neq 0$. Denote a set $S'=b^{-1}S=\{b^{-1}z: z\in S\}$. Then $\Gamma={\rm Cay}(\mathbb{F}_{2^m};S)$ has PEST at time $t$ between $(0,b)$ and $(c,d)$ if and only if $\Gamma'={\rm Cay}(\mathbb{F}_{2^m},S')$ has PEST between $(0,1)$ and $(b^{-1}c,b^{-1}d)$ at time $t$.\end{lem}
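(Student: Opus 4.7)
The plan is to exhibit an explicit graph isomorphism $\Gamma \to \Gamma'$ that carries the edge $(0,b)$ to $(0,1)$ and the edge $(c,d)$ to $(b^{-1}c, b^{-1}d)$, then deduce that the transfer matrices are conjugate and conclude by transporting the PEST amplitude. Since $b\neq 0$, multiplication by $b^{-1}$ is an $\mathbb{F}_2$-linear bijection $\phi:\mathbb{F}_{2^m}\to\mathbb{F}_{2^m}$, $\phi(x)=b^{-1}x$, and by construction $\phi(S)=S'$. Consequently, $u-v\in S$ if and only if $\phi(u)-\phi(v)\in S'$, so $\phi$ is a graph isomorphism from $\Gamma$ to $\Gamma'$.

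Let $P$ be the permutation matrix indexed by $\mathbb{F}_{2^m}$ with $Pe_x = e_{\phi(x)}$. The preceding observation gives $PAP^{-1}=A'$, where $A,A'$ are the adjacency matrices of $\Gamma,\Gamma'$ respectively; hence
\[
P\exp(\imath tA)P^{-1}=\exp(\imath tA').
\]
Using $P^{-1}=P^{T}$ and the fact that $\phi(0)=0$, $\phi(b)=1$, $\phi(c)=b^{-1}c$, $\phi(d)=b^{-1}d$, I get
\[
\tfrac{1}{2}(e_c-e_d)^T\exp(\imath tA)(e_0-e_b)=\tfrac{1}{2}(e_{b^{-1}c}-e_{b^{-1}d})^T\exp(\imath tA')(e_0-e_1).
\]
Taking absolute values and invoking the PEST characterisation in \eqref{f-318} shows that the left-hand side has modulus $1$ if and only if the right-hand side does, which is exactly the desired equivalence.

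There is essentially no obstacle here; the only point that requires care is the bookkeeping that $\phi$ really is a linear bijection of $\mathbb{F}_{2^m}$ (needed so that $\phi(u)-\phi(v)=\phi(u-v)$) and that $\phi$ maps the edge set of $\Gamma$ bijectively onto the edge set of $\Gamma'$. As an alternative (and as a sanity check), one can redo the proof through Theorem \ref{main-thm}: the eigenvalues transform as $\lambda'_x=\sum_{s\in S}(-1)^{{\rm Tr}(b^{-1}sx)}=\lambda_{b^{-1}x}$, and the substitution $x=by$ provides bijections $\Omega'_+\leftrightarrow\Omega_+$, $\Omega'_-\leftrightarrow\Omega_-$ (with the quadruple $(0,1,b^{-1}c,b^{-1}d)$ playing the role of $(a,b,c,d)$), under which the $2$-adic valuation conditions in Theorem \ref{main-thm} for $\Gamma'$ are identical to those for $\Gamma$. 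Either route yields the lemma.
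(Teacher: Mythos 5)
Your main argument is correct, and it takes a genuinely different route from the paper. The paper proves the lemma by passing through the spectral characterization of Lemma \ref{lem-1}: it computes $\lambda'_x=\sum_{z\in S'}\chi_x(z)=\lambda_{b^{-1}x}$ and then substitutes $x\mapsto b^{-1}x$ in condition (2) to match the two sets of constraints (this is exactly your ``sanity check'' route, so that part of your write-up coincides with the paper's proof). Your primary argument instead observes that $\phi(x)=b^{-1}x$ is an additive bijection with $\phi(S)=S'$, hence a graph isomorphism $\Gamma\to\Gamma'$ sending $(0,b)\mapsto(0,1)$ and $(c,d)\mapsto(b^{-1}c,b^{-1}d)$, and transports the transfer matrix by the conjugation $P\exp(\imath tA)P^{-1}=\exp(\imath tA')$. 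This is cleaner and strictly more general: it shows at once that any graph isomorphism carries PEST between a pair of edges to PEST between their images at the same time, with the same phase $\gamma$, without ever touching eigenvalues or the sets $\Omega_\pm$; indeed you could even argue directly from the defining relation $H(t)(e_0-e_b)=\gamma(e_c-e_d)$ rather than via the modulus condition \eqref{f-318}. What the paper's eigenvalue route buys in exchange is that it stays entirely inside the machinery of Lemma \ref{lem-1}, which is the form in which the lemma is subsequently applied (reducing to the edge $(0,1)$ so that $T_1=\{x:{\rm Tr}(x)=1\}$ in Theorem \ref{bound}). The only bookkeeping points in your version --- that $\phi(u)-\phi(v)=\phi(u-v)$, that $P^{-1}=P^{T}$, and that the images of the two edges are edges of $\Gamma'$ --- are all handled or immediate, so there is no gap.
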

\begin{proof}Firstly, it is obvious that $b+c+d=0\Leftrightarrow 1+b^{-1}c+b^{-1}d=0$.

For condition (2). The eigenvalues of $\Gamma'$ are
\begin{equation*}
  \lambda_x'=\sum_{z\in S'}\chi_x(z)=\sum_{z\in S}\chi_x(b^{-1}z)=\sum_{z\in S}\chi_{b^{-1}x}(z)=\lambda_{b^{-1}x}.
\end{equation*}
Suppose that $\Gamma$ has PEST between $(0,b)$ and $(c,d)$. Then by
substituting $x$ by $b^{-1}x$ in the condition (2), we have whenever $b^{-1}x\in  {\ker}({\rm Tr}(bX))$, i.e, $x\in {\ker}({\rm Tr}(X))$,
\begin{equation*}
 \exp(\imath t(\lambda_{x_0}-\lambda_x'))= \exp(\imath t (\lambda_{x_0}-\lambda_{b^{-1}x}))=\chi_{b^{-1}x}(c)=\chi_x(b^{-1}c).
\end{equation*}
Thus by Lemma \ref{lem-1}, $\Gamma'$ has PEST between $(0,1)$ and $(b^{-1}c, b^{-1}d)$.

The converse direction can be proved similarly.\qed
\end{proof}

In the following, we consider the minimum time $t$ at which a cubelike graph has PEST. We will provide a lower bound on the time $t$. As we will see that in the Sect. \ref{examples}, the lower bound is almost tight in some cases.

By Lemma \ref{lem-1} and Lemma \ref{lem-2}, in the next context, without loss of generality, we only consider whether $\Gamma$ has PEST between $(0,1)$ and $(c,d)$.

Define two subset of $\mathbb{F}_{2^m}$ by $T_i=\{x\in \mathbb{F}_{2^m}, {\rm Tr}(x)=i\}, i=0,1$. In the following context, we fix $x_0$ as an element in $\Omega_+$.

Firstly, we have the following result.

\begin{thm}\label{bound}Let $S$ be a subset of $\mathbb{F}_{2^m}$ with $0\not\in S$ and $\langle S\rangle=\mathbb{F}_{2^m}$. Suppose that $\Gamma={\rm Cay}(\mathbb{F}_{2^m},S)$ has PEST between two edges $(0,1)$ and $(c,d)$ at time $t$. Then $t$ is of the form $\frac{(2u+1)\pi}{M}, u\in \mathbb{Z}$, where $M=\gcd(\lambda_{x_0}-\lambda_x: x\in T_1)$, $s=|S|$. Consequently, the minimum time $t$ is $\frac{\pi}{M}$. Moreover, if there is an element $z_0(\neq 1)$ in $S$ such that $1+z_0\not\in S$, then $M$ is a power of $2$, say, $M=2^\ell$, and $\ell\leq \left\lfloor \frac{\log_2(2s(s+3))}{2}\right\rfloor$.\end{thm}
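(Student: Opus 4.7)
The plan is to split the theorem into three sub-claims and handle them in sequence: the form $t=(2u+1)\pi/M$ (and minimum $\pi/M$); the fact that $M$ is a power of $2$ under the hypothesis on $z_0$; and the upper bound on $\ell$.

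The first sub-claim is a direct consequence of Theorem~\ref{main-thm}. With $(a,b)=(0,1)$ one has $T_1=\Omega_+\sqcup\Omega_-$; writing $t=2\pi T$ with $T\in\mathbb{Q}$, Theorem~\ref{main-thm} requires $T(\lambda_{x_0}-\lambda_x)\in\mathbb Z$ for $x\in\Omega_+$ and $T(\lambda_{x_0}-\lambda_x)\in\tfrac12+\mathbb Z$ for $x\in\Omega_-$. Expressing $T$ in lowest terms and using $M=\gcd\{\lambda_{x_0}-\lambda_x:x\in T_1\}$, the half-integer condition on $\Omega_-$ forces an odd numerator, so $T=(2u+1)/(2M)$, hence $t=(2u+1)\pi/M$; the minimum positive value is $\pi/M$ at $u=0$.

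The main obstacle is showing $M$ is a power of $2$. I plan to argue by contradiction: suppose an odd prime $p$ divides $M$. Then $\lambda_x\equiv L:=\lambda_{x_0}\pmod p$ for every $x\in T_1$. Using the additive-character identity $\sum_{x\in T_1}(-1)^{\mathrm{Tr}(vx)}=\tfrac{n}{2}(\delta_{v,0}-\delta_{v,1})$ together with $\lambda_x=\sum_{z\in S}(-1)^{\mathrm{Tr}(xz)}$ yields
\[
\sum_{x\in T_1}(\lambda_x-L)(-1)^{\mathrm{Tr}(wx)}=\tfrac{n}{2}\bigl([w\in S]-[w+1\in S]\bigr)-\tfrac{nL}{2}\bigl(\delta_{w,0}-\delta_{w,1}\bigr).
\]
Because $\gcd(n/2,p)=1$, the congruence $\lambda\equiv L\pmod p$ on $T_1$ forces the left-hand side to vanish mod $p$ for every $w$; for $w\ne 0,1$ this gives $[w\in S]=[w+1\in S]$, i.e.\ $S$ must be closed under $w\mapsto w+1$ off $\{0,1\}$. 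Specialising to $w=z_0$ contradicts the hypothesis, so no odd prime divides $M$ and $M=2^\ell$. Recognising that the mod-$p$ constancy translates into translation invariance of the indicator of $S$ is the crux; without the $z_0$ hypothesis the argument would stall at moment bounds like $m^2\le s$ which are too weak.

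For the bound on $\ell$, I would compare two estimates for $\Sigma:=\sum_{x\in T_1}(\lambda_{x_0}-\lambda_x)^2$. The character-sum identities $\sum_{x\in T_1}\lambda_x=-n/2$ (using $1\in S$, which is forced by $(0,1)$ being an edge) and $\sum_{x\in T_1}\lambda_x^2=\tfrac{n}{2}(s-C)$, with $C=|\{z\in S:z+1\in S\}|\ge 0$, expand to give
\[
\Sigma=\tfrac{n}{2}\bigl[\lambda_{x_0}^2+2\lambda_{x_0}+(s-C)\bigr]\le\tfrac{n}{2}\,s(s+3),
\]
where the inequality uses only $|\lambda_{x_0}|\le s$ and $C\ge 0$. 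On the other side, for each $x\in\Omega_-$, Theorem~\ref{main-thm} gives $v_2(\lambda_{x_0}-\lambda_x)=\rho$, so $|\lambda_{x_0}-\lambda_x|\ge 2^\rho=M$; in the non-trivial case $c\ne 0,1$ one has $|\Omega_-|=n/4$, whence $M^2(n/4)\le\Sigma$. Combining yields $M^2\le 2s(s+3)$, i.e.\ $\ell\le\bigl\lfloor\tfrac12\log_2(2s(s+3))\bigr\rfloor$.
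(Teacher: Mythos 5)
Your proposal is correct and follows essentially the same route as the paper: the same character-sum identity over $T_1$ drives both the odd-prime contradiction (the paper evaluates $\sum_{x\in T_1}\lambda_x\chi_x(z_0)$ two ways, which is your identity specialized at $w=z_0$) and the same second-moment comparison $M^2\,|\Omega_-|\le\sum_{x\in T_1}(\lambda_{x_0}-\lambda_x)^2\le 2^{m-1}s(s+3)$ gives the bound on $\ell$. Your version is marginally cleaner in noting $\lambda_x\equiv\lambda_{x_0}\pmod p$ directly from $p\mid M$ rather than first computing that common residue, but the substance is identical.
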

\begin{proof}

Note that the first statement on the minimum time $t$ such that $\Gamma$ has PEST is $\frac{\pi}{M}$ by Theorem \ref{main-thm}. We proceed to prove  that if there is an element $z_0(\neq 1)$ in $S$ such that $1+z_0\not\in S$, then $M=2^\ell$ is a power of $2$. Furthermore, if $\Gamma$ has PEST, then $\ell \leq \left\lfloor\frac{\log_2(2s(s+3))}{2}\right\rfloor$.

For any cubelike graph ${\rm Cay}(\mathbb{F}_{2^m},S)$, we claim that $M=\gcd(\lambda_{x_0}-\lambda_x: x\in T_1)$ is a divisor of $2^m$ if there is an element $z_0(\neq 1)$ in $S$ such that $1+z_0\not\in S$. Before going to prove the claim, we first prove the following:
\begin{equation}\label{f-e9}
  \sum_{x\in T_1}\chi_x(z)=\left\{\begin{array}{cc}
                                    2^{m-1}, & \mbox{ if } z=0, \\
                                    -2^{m-1}, & \mbox{ if } z=1,\\
                                    0, & \mbox{ if $z\neq 0,1$}.
                                  \end{array}
  \right.
  \end{equation}
  \begin{equation}\label{f-e9'}
   \sum_{x\in T_0}\chi_x(z)=\left\{\begin{array}{cc}
                                    2^{m-1}, & \mbox{ if $z\in \mathbb{F}_2$}, \\
                                    0, & \mbox{ otherwise}.
                                  \end{array}
  \right.
  \end{equation}
Note that the map $L: \mathbb{F}_{2^m}\rightarrow T_0; x\mapsto x^2+x$ is two-to-one and ${\rm Tr}(x)={\rm Tr}(x^2)$ for all $x\in \mathbb{F}_{2^m}$. Thus
\begin{equation*}
  \sum_{x\in T_1}\chi_x(z)=\frac{1}{2}\sum_{x\in \mathbb{F}_{2^m}}(-1)^{{\rm Tr}((x^2+x+\delta)z)}=\left\{\begin{array}{cc}
                                    2^{m-1}\chi_z(\delta), & \mbox{ if } z\in \mathbb{F}_2, \\
                                    0, & \mbox{ otherwise}.
                                  \end{array}
  \right.
\end{equation*}
where $\delta\in \mathbb{F}_{2^m}$ with ${\rm Tr}(\delta)=1$. This completes the proof of (\ref{f-e9}). (\ref{f-e9'}) can be proved similarly.

Now, we prove the claim.

Suppose that on the contrary, there is an odd prime $p$ which is divisor of $M$. Let $\lambda_{x_0}-\lambda_x=Mt_x$ for all $x\in T_1$, where $t_x\in \mathbb{Z}$. Then we compute
\begin{eqnarray*}
  M\sum_{x\in T_1}t_x=\sum_{x\in T_1}(\lambda_{x_0}-\lambda_x) =\lambda_{x_0}2^{m-1}-\sum_{z\in S}\sum_{x\in T_1}\chi_x(z) =(\lambda_{x_0}+1_S(1))2^{m-1},
\end{eqnarray*}
where $1_S$ is the characteristic function of $S$, i.e, $1_S(x)=1$ if $x\in S$ and $0$ otherwise.
We note that the last equality is based on (\ref{f-e9}) and the fact that $0\not\in S$.
Thus, $\lambda_{x_0}\equiv -1_S(1) \pmod p$, and then $\lambda_x\equiv -1_S(1) \pmod p$ for all $x\in T_1$.
 Now, for the element $z_0$ in $S$, $1+z_0\not\in S$, we have, on the one hand,
\begin{equation*}
  \sum_{x\in T_1}\lambda_x\chi_x(z_0)=\sum_{y\in S}\sum_{x\in T_1}\chi_x(y+z_0)= 2^{m-1} \mbox{ (by (\ref{f-e9}))}.
  \end{equation*}
  On the other hand,
  \begin{equation*}
  \sum_{x\in T_1}\lambda_x\chi_x(z_0)\equiv \sum_{x\in T_1}(-1_S(1))\chi_x(z_0)\equiv (-1_S(1))\sum_{x\in T_1}\chi_x(z_0)\equiv 0 \pmod p.
  \end{equation*}
 Consequently, we have $p|2^{m-1}$ which is a contradiction. This proves the claim. That is, $M$ is a power of $2$, say, $M=2^\ell$.

Now, assume that $\Gamma$ has PEST. Then
for every $x\in \Omega_-$, we have $v_2(\lambda_{x_0}-\lambda_x)=\rho$. Moreover, for $y\in \Omega_+$, $v_2(\lambda_{x_0}-\lambda_y)\geq \rho+1$. Where $\Omega_+$ and $\Omega_-$ is defined by (\ref{f-e2}), or (\ref{f-e2'}). Thus $M=\gcd(\lambda_{x_0}-\lambda_x: x\in T_1)=2^\rho$. So that we have $\rho=\ell$.

For $x\in \Omega_-$, write $\lambda_{x_0}-\lambda_x=2^\rho \zeta(x)$, where $\zeta(x)\in \mathbb{Z}$ and $2\not|\zeta(x)$. Then
\begin{equation}\label{f-e10}
  \sum_{x\in T_1}\lambda_x^2=\sum_{y,z\in S}\sum_{x\in T_1}\chi_x(y+z)=2^{m-1}\left(s-|\{z: z\in S,1+z\in S\}|\right),
\end{equation}
Moreover,
\begin{equation*}\label{f-e11}
 \sum_{x\in T_1}(\lambda_{x_0}-\lambda_x)^2=2^{m-1}(\lambda_{x_0}^2+2\lambda_{x_0}\cdot1_S(1)+s-|\{z: z\in S,1+z\in S\}|).
\end{equation*}
As a consequence, we have
\begin{equation}\label{f-e12}
  \sum_{x\in \Omega_-}(\lambda_{x_0}-\lambda_x)^2\leq \sum_{x\in T_1}(\lambda_{x_0}-\lambda_x)^2\leq 2^{m-1}(s^2+3s).
\end{equation}
Meanwhile,
\begin{equation}\label{f-e13}
  \sum_{x\in \Omega_-}(\lambda_{x_0}-\lambda_x)^2=\sum_{x\in \Omega_-}2^{2\rho}\zeta(x)^2\geq 2^{2\rho}2^{m-2} \ \ (\mbox{ by $2\not| \zeta(x)$}).
\end{equation}
Combining (\ref{f-e12}) and (\ref{f-e13}) together, we get
\begin{equation*}
  2^{2\rho}\leq 2s(s+3).
\end{equation*}
That is
\begin{equation*}
  \ell=\rho\leq \left\lfloor\frac{\log_2(2s(s+3))}{2}\right\rfloor.
\end{equation*}
This completes the proof.\qed
\end{proof}

\section{Bent functions and PEST}
Let $f: \mathbb{F}_2^m\rightarrow \mathbb{F}_2$ be a Boolean function. If we endow the vector space $\mathbb{F}_2^m$ with the structure of the finite field $\mathbb{F}_{2^m}$, thanks to the choice of a basis of $\mathbb{F}_{2^m}$ over $\mathbb{F}_2$, then every non-zero Boolean function $f$ defined on $\mathbb{F}_{2^m}$ has a (unique) trace expansion of the form:
\begin{equation*}
  f(x)=\sum_{j\in \Gamma_m}{\rm Tr}^{o(j)}_1(a_jx^j)+\epsilon (1+x^{2^m-1}), \forall x\in \mathbb{F}_{2^m},
\end{equation*}
where $\Gamma_m$ is the set of integers obtained by choosing one element in each cyclotomic coset of 2
modulo $2^m-1$, $o(j)$ is the size of the cyclotomic coset of 2 modulo $2^m-1$ containing $j$, $a_j\in \mathbb{F}_{2^{o(j)}}$
and $\epsilon = wt(f)$ modulo 2 where $wt(f)$ is the Hamming weight of the image vector of $f$, that is, the
number of $x$ such that $f(x) = 1$. Denote ${\rm supp}(f)=\{x\in \mathbb{F}_{2^m}: f(x)=1\}$. It is obvious that the map $f\mapsto {\rm supp}(f)$ gives a one-to-one mapping from the set of Boolean functions to the power set of $\mathbb{F}_{2^m}$.

The Walsh-Hadamard transform of $f$ is defined by
\begin{equation*}
  \widehat{f}(a)=\sum_{x\in \mathbb{F}_{2^m}}(-1)^{f(x)+{\rm Tr}(ax)}, \forall a\in \mathbb{F}_{2^m}.
\end{equation*}
\begin{defn}Let $m=2k$ be a positive even integer. A Boolean function $f$ is called bent if $\widehat{f}(a)\in \{\pm 2^{k}\}$  for all $a\in \mathbb{F}_{2^m}$.\end{defn}
\begin{defn}Let $m$ be a positive integer. A Boolean function $f$ from $\mathbb{F}_{2^m}$ to $\mathbb{F}_2$ is said to be semi-bent if $\widehat{f}(a)\in \{0,\pm 2^{\lfloor \frac{m}{2}\rfloor}\} $. \end{defn}
It is well-known that bent function exists on $\mathbb{F}_2^{2k}$ for every $k$. Moreover, for every positive integer $k$, there are many infinite family of such functions. Semi-bent functions also exist in $\mathbb{F}_{2^m}$ for all integer $m\geq 1$. See for example, \cite{sihem}.

Let $f: \mathbb{F}_2^m\rightarrow \mathbb{F}_2$ be a Boolean function. $S={\rm supp}(f)\subseteq \mathbb{F}_{2^m}$. For the cubelike graph $\Gamma={\rm Cay}(\mathbb{F}_{2^m},S)$, its eigenvalues are
\begin{equation}\label{f-e14}
 \lambda_x=\sum_{z\in S}\chi_x(z)=\sum_{y\in \mathbb{F}_{2^m}}\frac{1-(-1)^{f(y)}}{2}\chi_x(y)=-\frac{1}{2}\widehat{f}(x), 0\neq x\in \mathbb{F}_{2^m}.
\end{equation}

For bent and semi-bent functions, the following results are known. See for example \cite{sihem}, page 72 and page 422.

\begin{lem}\label{lem-4} Let $m=2k$ be a positive even integer. Let $f$ be a Boolean function on $\mathbb{F}_{2^m}$, $S={\rm supp}(f)$.

(1) If $f$ be a bent function, then $|S|=2^m\pm 2^{k-1}$. Moreover, define a function $\widetilde{f}$, called the dual of $f$, by
\begin{equation*}
  \widehat{f}(x)=2^k(-1)^{\widetilde{f}(x)}.
\end{equation*}
Then $\widetilde{f}$ is also a bent function. As a consequence, the numbers of occurrences of $\widehat{f}(x)$ taking the values $\pm 2^k$ are $2^m\pm 2^{k-1}$ or $2^m\mp 2^{k-1}$.

 (2) If $f$ is a semi-bent function, then $|S|\in \{2^{m-1}, 2^{m-1}\pm 2^{k-1}\}$. Moreover, we have the following table.
 \begin{center}Table 1. Walsh spectrum
of semi-bent functions
$ f$ with $f(0)=0$
\begin{tabular}{|c|c|}
  \hline
  Value of $\widehat{f}(x),x\in \mathbb{F}_{2^m}$ & Frequency \\
    \hline
 $0$ & $2^{m-1}+2^{m-2}$ \\
   \hline
  $2^{k+1}$ & $2^{m-3}+2^{k-2}$ \\
    \hline
   $-2^{k+1}$ & $2^{m-3}-2^{k-2}$ \\
  \hline
\end{tabular}
\end{center} \end{lem}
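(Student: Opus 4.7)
The plan is to handle the two parts of the lemma separately, with each reducing to a direct computation of low-degree moments of the Walsh--Hadamard transform combined with Parseval's identity $\sum_{a\in \mathbb{F}_{2^m}}\widehat{f}(a)^2=2^{2m}$ (which is itself an immediate consequence of the orthogonality of additive characters).

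For part (1), I would first compute $\widehat{f}(0)=\sum_{x\in \mathbb{F}_{2^m}}(-1)^{f(x)}=2^m-2|S|$; since $f$ is bent, $\widehat{f}(0)\in\{\pm 2^k\}$, and solving yields $|S|=2^{m-1}\mp 2^{k-1}$ (the ``$2^m$'' appearing in the statement should read $2^{m-1}$). Next I would verify that $\widetilde{f}$ is bent by applying the Fourier inversion formula: starting from $2^m(-1)^{f(x)}=\sum_{a}\widehat{f}(a)(-1)^{\mathrm{Tr}(ax)}=2^k\sum_a(-1)^{\widetilde{f}(a)+\mathrm{Tr}(ax)}$, one reads off $\widehat{\widetilde{f}}(x)=2^k(-1)^{f(x)}\in\{\pm 2^k\}$, so $\widetilde{f}$ is bent. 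Finally, letting $n_+,n_-$ denote the number of $a\in\mathbb{F}_{2^m}$ with $\widehat{f}(a)=+2^k$ and $\widehat{f}(a)=-2^k$ respectively, one has $n_-=|\mathrm{supp}(\widetilde{f})|$, and applying the just-proved weight formula to the bent function $\widetilde{f}$ gives $n_-=2^{m-1}\pm 2^{k-1}$ and $n_+=2^{m-1}\mp 2^{k-1}$, as claimed.

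For part (2), the cardinality claim again follows from $\widehat{f}(0)=2^m-2|S|\in\{0,\pm 2^k\}$, yielding $|S|\in\{2^{m-1},2^{m-1}\pm 2^{k-1}\}$. For the full Walsh distribution, I would set up a $3\times 3$ linear system. Write $n_0$, $n_+$, $n_-$ for the numbers of $a$ with $\widehat{f}(a)$ equal to $0$, $2^{k+1}$, $-2^{k+1}$ respectively (assuming $f(0)=0$, so that the relevant normalization is as in the table). Then
\begin{align*}
n_0+n_++n_-&=2^m,\\
2^{k+1}(n_+-n_-)&=\sum_{a}\widehat{f}(a)=2^m(-1)^{f(0)}=2^m,\\
2^{2(k+1)}(n_++n_-)&=\sum_{a}\widehat{f}(a)^2=2^{2m}.
\end{align*}
Solving in order: Parseval gives $n_++n_-=2^{m-2}$, hence $n_0=2^{m-1}+2^{m-2}$; the first moment gives $n_+-n_-=2^{k-1}$; combining yields $n_\pm=2^{m-3}\pm 2^{k-2}$, matching the table.

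The bookkeeping is straightforward once the two inputs (Parseval and the trivial first-moment identity $\sum_a\widehat{f}(a)=2^m(-1)^{f(0)}$) are in hand, so the only conceptually delicate step is the duality assertion in (1). The main subtlety there is reading off $\widetilde{f}(x)\in\mathbb{F}_2$ unambiguously from the Fourier-inversion identity, which requires knowing a priori that each value of $\widehat{f}(a)/2^k$ lies in $\{\pm 1\}$; this is exactly the bent hypothesis, so the argument closes cleanly.
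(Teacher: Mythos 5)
Your proposal is correct, but note that the paper does not actually prove this lemma: it is quoted as a known result with a pointer to Mesnager's book (\cite{sihem}, pp.~72 and 422), so there is no in-paper argument to compare against. Your self-contained derivation is the standard one and is sound: the weight formulas follow from evaluating $\widehat{f}(0)=2^m-2|S|$, the duality claim follows from Fourier inversion (and you correctly isolate the one point where bentness is genuinely needed, namely that $\widehat{f}(a)/2^k\in\{\pm1\}$ so that $\widetilde{f}$ is a well-defined Boolean function), and the Walsh distribution in part (2) drops out of the zeroth, first and second moments together with Parseval. Two remarks. First, you rightly flag that the stated cardinality $|S|=2^m\pm 2^{k-1}$ (and likewise the frequencies $2^m\pm 2^{k-1}$) must read $2^{m-1}\pm 2^{k-1}$; this is a typo in the lemma, since the two frequencies must sum to $2^m$. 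Second, be aware that your treatment of part (2) silently uses two different amplitudes: the cardinality claim $|S|\in\{2^{m-1},2^{m-1}\pm 2^{k-1}\}$ requires $\widehat{f}(0)\in\{0,\pm 2^{k}\}$ (the paper's $2^{\lfloor m/2\rfloor}$ definition of semi-bent), whereas the table requires amplitude $2^{k+1}$ (the standard definition for even $m$, under which $|S|$ would instead be $2^{m-1}$ or $2^{m-1}\pm 2^{k}$). That inconsistency sits in the lemma as stated rather than in your computations, but a careful write-up should fix the amplitude once and derive both conclusions from it.
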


It is known also that if $f$ is a bent function, then so is its complementary function, i.e, $g=1+f$. Obviously, if $|{\rm supp}(f)|=2^{m-1}+2^{k-1}$, then $|{\rm supp}(1+f)|=2^{m-1}-2^{k-1}$. 


If $m=2k+1$ is an odd number, then we have the following concrete constructions of cubelike graphs having PEST. In these constructions, we use a so-called ``lifting technique". Precisely, we first take a bent function or semi-bent function $f$ on $\mathbb{F}_{2}^{2k}$, then find its support $S$. We construct a subset in $\mathbb{F}_2^{2k+1}$ by
\begin{equation}\label{f-S}
  S'=\{(0,z): z\in {\rm supp}(f)\}\cup \{(1,z): z\in {\rm supp}(f)\}.
\end{equation}
We then show that one can find a flat to separate the plane $T_1$ into $\Omega_+$ and $\Omega_-$. The main idea of the ``lifting technique" is to divide the eigenvalues of the graphs into two parts, such that in one part of $T_1$, the corresponding eigenvalues are all zero. Then hopefully, we can get some desired graphs having PEST. The following two results, namely, Theorem \ref{thm-4} and Theorem \ref{thm-semi}, are obtained following this approach.

\begin{thm}\label{thm-4} Let $m=2k+1$, $k\geq 2$. Let $f$ be a bent function on $\mathbb{F}_2^{2k}$ satisfying $f(1,1,\cdots,1)=1$. Let $S={\rm supp}(f)\subseteq \mathbb{F}_2^{2k}$ and let $S'$ be defined in (\ref{f-S}).
Take
\begin{equation*}
  a=(00^{(2k)}), b=(1 1^{(2k)}),c=(1 0^{(2k)}),d=(01^{(2k)})\in \mathbb{F}_2^m.
\end{equation*}
Let $\Gamma={\rm Cay}(\mathbb{F}_2^m,S')$ be the cubelike graph associated with the connection set $S'$. Then $\Gamma$ has PEST between $(a,b)$ and $(c,d)$ at time $\frac{\pi}{2^k}$.
\end{thm}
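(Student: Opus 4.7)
The plan is to verify the two conditions of Theorem \ref{main-thm} for the chosen $a,b,c,d$. The first condition $a+b+c+d=0$ is immediate from the definition. Before applying the theorem, I would also check that $(a,b)$ and $(c,d)$ really are edges of $\Gamma$: since $b-a=d-c=(1,1^{(2k)})$, this reduces to $(1,1^{(2k)})\in S'$, which is equivalent to $(1,\ldots,1)\in S={\rm supp}(f)$; this is exactly where the hypothesis $f(1,\ldots,1)=1$ is used.

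The key step is to exploit the product structure of $S'$ to compute eigenvalues. Writing the character index $x=(x_0,x')\in\mathbb{F}_2\times\mathbb{F}_2^{2k}$, a direct computation splits the sum over $S'$ into contributions from the two copies of $S$, giving
\[
\lambda_{(x_0,x')}=\bigl(1+(-1)^{x_0}\bigr)\sum_{z\in S}(-1)^{x'\cdot z},
\]
which vanishes identically when $x_0=1$. For $x_0=0$ and $x'\neq 0$, formula (\ref{f-e14}) applied to $f$ gives $\lambda_{(0,x')}=-\widehat{f}(x')$, which lies in $\{\pm 2^k\}$ by the bent property.

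Next I would identify $\Omega_\pm$ from (\ref{f-e2'}). With $b+a=(1,1^{(2k)})$ and $c+a=(1,0^{(2k)})$, an index $x=(x_0,x_1,\ldots,x_{2k})$ satisfies $(b+a)\cdot x=x_0+\sum_{i\geq 1}x_i$ and $(c+a)\cdot x=x_0$. Hence
\[
\Omega_+=\bigl\{x:x_0=0,\ \textstyle\sum_{i\geq 1}x_i=1\bigr\},\qquad \Omega_-=\bigl\{x:x_0=1,\ \textstyle\sum_{i\geq 1}x_i=0\bigr\}.
\]
By the eigenvalue computation above, $\lambda_x=0$ for every $x\in\Omega_-$, while on $\Omega_+$ the values $\lambda_x=-\widehat{f}(x')$ (with $x'$ of odd Hamming weight, hence nonzero) all lie in $\{\pm 2^k\}$.

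Finally I would fix any $x_0^{*}\in\Omega_+$ and verify the valuation condition. For $x\in\Omega_-$ one has $\lambda_{x_0^{*}}-\lambda_x=\lambda_{x_0^{*}}=\pm 2^k$, so $v_2(\lambda_{x_0^{*}}-\lambda_x)=k$ uniformly in $x$. For $y\in\Omega_+$ both $\lambda_{x_0^{*}}$ and $\lambda_y$ lie in $\{\pm 2^k\}$, so their difference lies in $\{0,\pm 2^{k+1}\}$ and $v_2(\lambda_{x_0^{*}}-\lambda_y)\geq k+1$. This is exactly condition (2) of Theorem \ref{main-thm} with $\rho=k$; together with (1) it yields PEST between $(a,b)$ and $(c,d)$. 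The quantity $M=\gcd\{\lambda_{x_0^{*}}-\lambda_x:x\in T_1\}$ equals $2^k$, giving the minimum time $t=\pi/M=\pi/2^k$. There is no serious obstacle: the only cosmetic point is ensuring $\Omega_-$ is nonempty so that $\rho$ is genuinely realized, which is clear since e.g.\ $(1,0,\ldots,0)\in\Omega_-$.
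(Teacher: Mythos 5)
Your proposal is correct and follows essentially the same route as the paper: compute the eigenvalues of the lifted graph (zero on the half with first coordinate $1$, $\pm 2^k$ from the bent property on the other half), identify $\Omega_\pm$ from (\ref{f-e2'}), and check the $2$-adic valuation condition of Theorem \ref{main-thm} with $\rho=k$. Your added remarks — that $f(1,\dots,1)=1$ is exactly what makes $(a,b)$ and $(c,d)$ edges, and that $M=2^k$ yields the time $\pi/2^k$ — are points the paper leaves implicit, but they do not change the argument.
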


\begin{proof}
By (\ref{f-e2'}) and the choice of $a,b,c,d$, it is obvious that
\begin{eqnarray*}
  \Omega_+ &=& \{ (x_1,\cdots,x_m)\in \mathbb{F}_2^m: x_1+\cdots+x_m=1,x_1=0\},\\
  \Omega_-&=& \{ (x_1,\cdots,x_m)\in \mathbb{F}_2^m: x_1+\cdots+x_m=1,x_1=1\}.
\end{eqnarray*}
The eigenvalues of $\Gamma$ are determined by the following:

For every $\mathbf{x}\in \mathbb{F}_2^{2k}$,
\begin{equation*}
  \lambda_{(0\mathbf{x})}=\sum_{\mathbf{z}\in S}(-1)^{0\cdot 0+\mathbf{z}\cdot \mathbf{x}}+\sum_{\mathbf{z}\in S}(-1)^{1\cdot 0+\mathbf{z}\cdot \mathbf{x}}=2\chi_{\mathbf{x}}(S).
\end{equation*}
where $\chi_{\mathbf{x}}(S)=\sum_{\mathbf{z}\in S}(-1)^{\mathbf{z}\cdot\mathbf{x}}=-2^{k-1}(-1)^{\widetilde{f}(\mathbf{x})}=\pm 2^{k-1}$ by (\ref{f-e14}). Moreover,
\begin{equation*}
  \lambda_{(1\mathbf{x})}=\sum_{\mathbf{z}\in S}(-1)^{0\cdot 1+\mathbf{z}\cdot \mathbf{x}}+\sum_{\mathbf{z}\in S}(-1)^{1\cdot 1+\mathbf{z}\cdot \mathbf{x}}=0.
\end{equation*}
Therefore, for every $x=(x_1\cdots x_m)\in \Omega_+$, $v_2(\lambda_{x_0}-\lambda_x)=v_2(\pm 2^k\pm 2^k)\geq k+1$. Meanwhile, for every $y=(y_1\cdots y_m)\in \Omega_-$, $v_2(\lambda_{x_0}-\lambda_y)=v_2(\pm 2^k)=k$. The desired result follows with Theorem \ref{main-thm}.\qed \end{proof}

We can also use some semi-bent functions to get cubelike graphs which have PEST. Before going to present our result, we need a lemma first.

\begin{lem}\label{lem-5} \cite{dillon} Let $m=2k+1$ be an odd integer, $k\geq 2$. Let $f(x)={\rm Tr}(x^{2^e+1})$, where $e$ is a positive integer satisfying $\gcd(e,m)=1$. Then
\begin{equation}\label{f-e15}
 \widehat{ f}(a)=\left\{\begin{array}{cc}
                          0, & \mbox{ if ${\rm Tr}(a)=0$},\\
                          \pm 2^{k+1}, & \mbox{ if ${\rm Tr}(a)=1$}.
                        \end{array}
 \right.
\end{equation}
\end{lem}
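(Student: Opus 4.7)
The plan is to apply the standard Walsh-squaring method for quadratic Boolean functions. Squaring $\widehat{f}(a)$ and making the substitution $y = x+u$ in the resulting double sum gives
$$\widehat{f}(a)^2 = \sum_{u\in \mathbb{F}_{2^m}}(-1)^{{\rm Tr}(au)} \sum_{x\in \mathbb{F}_{2^m}} (-1)^{f(x)+f(x+u)}.$$
Expanding $(x+u)^{2^e+1}$ and applying Frobenius invariance of the trace (i.e.\ ${\rm Tr}(z^{2^e}) = {\rm Tr}(z)$) turns the exponent into a linear form in $x$: namely, $f(x)+f(x+u) = {\rm Tr}\bigl(x(u^{2^e}+u^{2^{m-e}})\bigr)+f(u)$. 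The inner $x$-sum therefore collapses—it equals $2^m$ on the kernel of the linearized polynomial $L(u) := u^{2^e}+u^{2^{m-e}}$ and is $0$ elsewhere.

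The key step is to determine this kernel. Raising $L(u)=0$ to the $2^e$-th power and using $u^{2^m}=u$ reduces the equation to $u^{2^{2e}} = u$, whose solution set is $\mathbb{F}_{2^{\gcd(2e,m)}}$. Here the hypotheses that $m$ is odd and $\gcd(e,m)=1$ are essential: together they force $\gcd(2e,m)=1$, so the kernel collapses to $\mathbb{F}_2$. This reduction of the kernel is the only place the hypotheses enter, and it is the crux of the argument—without it the Walsh spectrum would spread across more values.

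It remains to evaluate the surviving two-term sum. Using $f(0)=0$ and $f(1)={\rm Tr}(1) = m \bmod 2 = 1$ (since $m$ is odd), one obtains
$$\widehat{f}(a)^2 = 2^m\bigl(1+(-1)^{{\rm Tr}(a)+1}\bigr) = 2^m\bigl(1-(-1)^{{\rm Tr}(a)}\bigr),$$
which is $0$ when ${\rm Tr}(a)=0$ and $2^{m+1} = 2^{2k+2}$ when ${\rm Tr}(a)=1$. Extracting square roots yields the claimed values $\widehat{f}(a) = 0$ and $\widehat{f}(a) = \pm 2^{k+1}$ respectively. I expect no further subtleties: the entire argument is a linearity-of-trace computation wrapped around the single arithmetic fact about $\gcd(2e,m)$.
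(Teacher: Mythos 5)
Your argument is correct and complete. Note, however, that the paper itself offers no proof of this lemma at all: it is imported as a known result with a citation to Dillon--Dobbertin, so there is no in-paper argument to compare against. What you have supplied is the standard self-contained derivation for Gold-type quadratic functions, and every step checks out: the expansion $(x+u)^{2^e+1}=x^{2^e+1}+x^{2^e}u+u^{2^e}x+u^{2^e+1}$ combined with ${\rm Tr}(x^{2^e}u)={\rm Tr}(xu^{2^{m-e}})$ gives exactly the linear form you state; the kernel computation via $L(u)^{2^e}=u^{2^{2e}}+u$ and $\mathbb{F}_{2^{2e}}\cap\mathbb{F}_{2^m}=\mathbb{F}_{2^{\gcd(2e,m)}}=\mathbb{F}_2$ (using $m$ odd and $\gcd(e,m)=1$) is the genuine crux, as you say; and the two surviving terms with $f(0)=0$, $f(1)={\rm Tr}(1)=1$ give $\widehat{f}(a)^2=2^m\bigl(1-(-1)^{{\rm Tr}(a)}\bigr)$, which is $0$ or $2^{2k+2}$ according to ${\rm Tr}(a)$. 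Since $\widehat{f}(a)$ is a real integer, extracting the square root is legitimate. The only cosmetic remark is that the hypothesis $k\geq 2$ plays no role in your proof (nor is it needed for the spectrum statement); it is inherited from the context in which the paper applies the lemma.
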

Now, we give our construction of cubelike graphs having PEST based on semi-bent functions.
\begin{thm}\label{thm-semi}Let $m=2k+1$, $k\geq 1$ be a positive integer. Let $f(x)={\rm Tr}(x^{2^e+1})$ which is a semi-bent function on $\mathbb{F}_{2^{2k+1}}$. Let $S={\rm supp}(f)$. Then $|S|=2^{2k}$. Define a subset of $\mathbb{F}_2\times \mathbb{F}_{2^m}$ by (\ref{f-S}).
Put
\begin{equation*}
  a=(00^{(m)}), b=(11^{(m)}),c=(0 1^{(m)}),d=(10^{(m)})\in \mathbb{F}_2^m.
\end{equation*}
Let $\Gamma={\rm Cay}(\mathbb{F}_2^{m+1},S')$ be the cubelike graph associated with the connection set $S'$. Then $\Gamma$ has PEST between $(a,b)$ and $(c,d)$ at time $\frac{\pi}{2^{k+1}}$.
\end{thm}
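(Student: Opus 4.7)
The plan is to follow the template of Theorem~\ref{thm-4}: the lifting construction (\ref{f-S}) annihilates the spectrum on the affine hyperplane where the first coordinate equals $1$, while Lemma~\ref{lem-5} splits the remaining eigenvalues according to the trace, and both pieces fit perfectly into the conditions of Theorem~\ref{main-thm} with $\rho=k+1$. As a warm-up, the weight formula $|S|=2^{m-1}-\tfrac{1}{2}\widehat{f}(0)$ combined with $\widehat{f}(0)=0$ (Lemma~\ref{lem-5} at $a=0$, since ${\rm Tr}(0)=0$) gives $|S|=2^{m-1}=2^{2k}$, and in characteristic $2$ the set $S'$ is automatically its own negative.

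Fix a trace-orthogonal basis of $\mathbb{F}_{2^m}/\mathbb{F}_2$ so that $\mathbf{x}\cdot\mathbf{z}={\rm Tr}(xz)$ under the identification $\mathbb{F}_2^m\cong\mathbb{F}_{2^m}$. With $a+b=(1\,1^{(m)})$ and $a+c=(0\,1^{(m)})$, the sets in (\ref{f-e2'}) unfold to
\begin{align*}
\Omega_+&=\{(1\,\mathbf{x}):\mathbf{x}\in\mathbb{F}_{2^m},\ {\rm Tr}(\mathbf{x})=0\},\\
\Omega_-&=\{(0\,\mathbf{x}):\mathbf{x}\in\mathbb{F}_{2^m},\ {\rm Tr}(\mathbf{x})=1\}.
\end{align*}
For $(u\,\mathbf{x})\in\mathbb{F}_2\times\mathbb{F}_{2^m}$ the eigenvalues of $\Gamma$ are
\begin{equation*}
\lambda_{(u\,\mathbf{x})}=\bigl(1+(-1)^u\bigr)\sum_{z\in S}(-1)^{\mathbf{x}\cdot z},
\end{equation*}
so $\lambda_{(1\,\mathbf{x})}=0$ identically, whereas by (\ref{f-e14}) one has $\lambda_{(0\,\mathbf{x})}=-\widehat{f}(\mathbf{x})$ for every $\mathbf{x}\neq 0$. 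Lemma~\ref{lem-5} then forces $\lambda_{(0\,\mathbf{x})}=\pm 2^{k+1}$ for every $\mathbf{x}$ with ${\rm Tr}(\mathbf{x})=1$, i.e.\ on all of $\Omega_-$.

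It remains to verify the two hypotheses of Theorem~\ref{main-thm}. The identity $a+b+c+d=0$ is immediate from the definitions. Picking any $x_0\in\Omega_+$ we have $\lambda_{x_0}=0$; hence $\lambda_{x_0}-\lambda_y=0$ for every $y\in\Omega_+$ (trivially giving $v_2\geq k+2$), while $\lambda_{x_0}-\lambda_x=\mp 2^{k+1}$ for every $x\in\Omega_-$, yielding the constant $v_2(\lambda_{x_0}-\lambda_x)=k+1=\rho$. Theorem~\ref{main-thm} therefore applies, and since $\gcd\{\lambda_{x_0}-\lambda_x:x\in\Omega_+\cup\Omega_-,\,x\neq x_0\}=2^{k+1}$, the minimum time at which PEST occurs is $t=\pi/2^{k+1}$. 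The only bookkeeping is the translation between the $\mathbb{F}_2^{m+1}$-description of $\Omega_\pm$ used in Theorem~\ref{main-thm} and the $\mathbb{F}_2\times\mathbb{F}_{2^m}$-description required to invoke Lemma~\ref{lem-5}; once the trace-orthogonal basis is fixed this is routine, so no genuine obstacle arises.
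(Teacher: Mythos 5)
Your proposal is correct and follows essentially the same route as the paper: lift $S$ to $S'$ so that $\lambda_{(1\,\mathbf{x})}=0$, use Lemma~\ref{lem-5} and (\ref{f-e14}) to get $\lambda_{(0\,\mathbf{x})}=\pm 2^{k+1}$ on $\Omega_-$, identify $\Omega_\pm$ via the trace-orthogonal basis, and conclude with Theorem~\ref{main-thm} with $\rho=k+1$ and $M=2^{k+1}$. The only step you wave off as ``routine bookkeeping'' --- that the all-ones coordinate vector corresponds to $1\in\mathbb{F}_{2^m}$, i.e.\ $\sum_i\alpha_i=1$, so that $(1^{(m)})\cdot\mathbf{x}={\rm Tr}(x)$ --- is spelled out explicitly in the paper, but it is indeed a short verification and your argument is complete in substance.
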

\begin{proof}The proof of this result is similar to that of Theorem \ref{thm-4}. We embed $\mathbb{F}_2\times \mathbb{F}_{2^m}$ into $\mathbb{F}_{2}^{m+1}$ as a vector space. Recall that the dual group of $\mathbb{F}_2\times \mathbb{F}_{2^m}$ is $\widehat{F_2}\times \widehat{\mathbb{F}_{2^m}}$, i.e, $\{\chi_{(u,x)}=\chi_u\chi_x: \chi_u\in \widehat{F_2}, \chi_x\in \widehat{\mathbb{F}_{2^m}}\}$. For every $(v,y)\in \mathbb{F}_2\times \mathbb{F}_{2^m}$, $\chi_{(u,x)}(v,y)=(-1)^{uv+{\rm Tr}(xy)}$. If we fixed a trace-orthogonal basis of $\mathbb{F}_{2^m}$ over $\mathbb{F}_2$, then ${\rm Tr}(xy)=\mathbf{x}\cdot\mathbf{y}$, where $\mathbf{x}$(resp. $\mathbf{y}$) is the vector corresponding to $x$ (resp. $y$). Since $b-a=(11\cdots 1)=d-c$, we have
\begin{equation*}
  \chi_{(u,x)}(b-a)= \chi_{(u,x)}(d-c)=(-1)^{u+\mathbf{x}\cdot (1\cdots 1)}=(-1)^{u+{\rm Tr}(x)}.
\end{equation*}
We note that under a trace-orthogonal basis $\{\alpha_1,\cdots,\alpha_m\}$ of $\mathbb{F}_{2^m}$ over $\mathbb{F}_2$, the vector $(1\cdots1)$ corresponds to $1$ in $\mathbb{F}_{2^m}$, i.e, $\sum_{i=1}^m\alpha_i=1$. This fact can be proved as the following:

Since for every fixed $i$, ${\rm Tr}(\alpha_i\sum_{j=1}^m\alpha_j)=\sum_{j=1}^m{\rm Tr}(\alpha_i\alpha_j)=1$, for every $x=\sum_{j=1}^mx_j\alpha_j\in \mathbb{F}_{2^m}$, we have
\begin{equation*}
  {\rm Tr}(x(\sum_{i=1}^m\alpha_i-1))=\sum_{j=1}^mx_j(\sum_{i=1}^m{\rm Tr}(\alpha_j\alpha_i)-1)=0.
\end{equation*}
Thus, $\sum_{j=1}^m\alpha_j=1$.

Therefore, in this case, we have
\begin{eqnarray*}
T_1&=&\{(u,x)\in \mathbb{F}_2\times \mathbb{F}_{2^m}: u+{\rm Tr}(x)=1\}, \mbox{ and },\\
  \Omega_- &=& \{ (0,x)\in \mathbb{F}_2\times \mathbb{F}_2^m: {\rm Tr}(x) =1\}\subseteq T_1,\\
  \Omega_+&=& \{ (1,x)\in \mathbb{F}_2\times \mathbb{F}_2^m: {\rm Tr}(x)=0\}\subseteq T_1.
\end{eqnarray*}
(See (\ref{f-HE}) also for the idea of how to chose $a,b,c,d$).

Since $\widehat{f}(0)=0$, $|S|=2^{m-1}$. By Lemma \ref{lem-5} and (\ref{f-e14}),
the eigenvalues of $\Gamma$ are
\begin{equation*}
  \lambda_{(00)}=|S'|=2|S|=2^m, \lambda_{(1\mathbf{x})}=0, \mbox{ $\forall \mathbf{x}$, and } \lambda_{(0\mathbf{x})}=\pm 2^{k+1}, \forall (0,\mathbf{x})\in \Omega_-.
\end{equation*}
Thus, for every $y\in \Omega_-$,  $v_2(\lambda_{x_0}-\lambda_y)=v_2(0\pm 2^{k+1})=k+1$, and for $x\in \Omega_+$, $v_2(\lambda_{x_0}-\lambda_x)=v_2(0-0)=\infty>k+1$. Thus we get the desired result by Theorem \ref{main-thm}.\qed
\end{proof}

\begin{rem}In the above Theorem \ref{thm-semi}, we just use a special semi-bent function to construct some cubelike graphs having PEST. Using the same idea, one can obtain many such graphs by using some other semi-bent functions.\end{rem}
\section{Some concrete constructions}\label{examples}
In this section, we present some examples to illustrate our results.

\begin{exam}\label{exam-1} Let $m=3$ and $S=\{(001),(110),(010), (101)\}\in \mathbb{F}_2^3$. Let $\Gamma={\rm Cay}(\mathbb{F}_2^3,S)$ be the cubelike graph associated with the connection set $S$. Then $\Gamma$ doesn't have PEST between $(000,001)$ and $(c,c+(001))$ for every $c\in \mathbb{F}_2^3$.
\end{exam}
\begin{proof} The eigenvalues of $\Gamma$ are
\begin{eqnarray*}
 && \lambda_{(000)}=4, \mbox{ and}\\
 &&\lambda_{(x_1x_2x_3)}=(-1)^{x_1+x_2}+(-1)^{x_2}+(-1)^{x_3}+(-1)^{x_1+x_3}, (x_1x_2x_3)\neq (000).
\end{eqnarray*}
Therefore,
\begin{equation*}
  \lambda_{(000)}=4, \lambda_{(011)}=-4, \lambda_{(001)}=\lambda_{(010)}=\lambda_{(100)}=\lambda_{(101)}=\lambda_{(110)}=\lambda_{(111)}=0.
\end{equation*}
Take $a=(000), b=(001), c=(111),d=(110)$. Then by (\ref{f-e2'}),
\begin{eqnarray*}
  \Omega_+ &=& \{(x_1x_2x_3)\in \mathbb{F}_2^3: x_1+x_2+x_3=0, x_3=1\}\\
  \Omega_-&=& \{(x_1x_2x_3)\in \mathbb{F}_2^3: x_1+x_2+x_3=1, x_3=1\}.
\end{eqnarray*}
Thus,
\begin{equation*}
   \Omega_+=\{(001),(111)\}, \ \ \Omega_-=\{(101),(011)\}.
\end{equation*}
It is easy to see that the condition (2) of Theorem \ref{main-thm} fails, and then $\Gamma$ doesn't have PEST between $(a,b)$ and $(c,d)$. In fact, using the same way, we can check that there is no element $z\in \mathbb{F}_2^3$ such that $\Gamma$ has PEST between $(000,001)$ and $(z,z+(001))$.\qed
\end{proof}
However, if we change the connection set $S$ in Example 1, then we may get a cubelike graph which has PEST, as the following example shows.
\begin{exam}\label{exam-2} Let $S=\{(001),(110),(010)\}\in \mathbb{F}_2^3$. Let $\Gamma={\rm Cay}(\mathbb{F}_2^3,S)$ be the corresponding cubelike graph. Take $a=(000)$, $b=(001)$, $c=(101)$, $d=(100)$. Then $\Gamma$ has PEST between $(a,b)$ and $(c,d)$.\end{exam}
\begin{proof} In this case, the eigenvalues of $\Gamma$ are
\begin{eqnarray*}
 && \lambda_{(000)}=3, \mbox{ and}\\
 &&\lambda_{(x_1x_2x_3)}=(-1)^{x_1+x_2}+(-1)^{x_2}+(-1)^{x_3}, (x_1x_2x_3)\neq (000).
\end{eqnarray*}
Hence, $\lambda_{(000)}=3$ and
\begin{equation*}
\lambda_{(011)}=-3, \lambda_{(001)}=\lambda_{(110)}=\lambda_{(100)}=1, \lambda_{(010)}=\lambda_{(101)}=\lambda_{(111)}=-1.
\end{equation*}
Take $a=(000), b=(001), c=(101),d=(010)$. Then by (\ref{f-e2'}) again,
\begin{eqnarray*}
  \Omega_+ &=& \{(x_1x_2x_3)\in \mathbb{F}_2^3: x_1+x_3=0, x_3=1\}\\
  \Omega_-&=& \{(x_1x_2x_3)\in \mathbb{F}_2^3: x_1+x_3=1, x_3=1\}.
\end{eqnarray*}
Thus,
\begin{equation*}
   \Omega_+=\{(101),(111)\}, \ \ \Omega_-=\{(001),(011)\}.
\end{equation*}
Therefore, we have $v_2(\lambda_{x_0}-\lambda_x)=1$ for $x\in \Omega_-$, and $v_2(\lambda_{x_0}-\lambda_x)\geq 2$ for $x\in \Omega_+$. By Theorem \ref{main-thm}, $\Gamma$ has PEST between $(a,b)$ and $(c,d)$ at time $t=\frac{\pi}{2}$.\qed \end{proof}

\begin{rem}Note that the characteristic function of the connection set in Example \ref{exam-2} is $$f(x_1x_2x_3)=x_1x_2x_3+x_1x_3+x_2+x_3$$ which is neither a bent function nor a semi-bent function on $\mathbb{F}_2^3$. The Walsh Hadamard spectra of it are $\{6^{(1)},2^{(4)}, -2^{(3)}\}$. Note also that in this example, we just use the inner product to define the characters of $\mathbb{F}_2^3$. We don't use a trace-orthogonal basis, thus in this example, $1=(001)\in S$. Therefore $(0,1)$ is an edge of $\Gamma$.\end{rem}

\begin{exam}\label{exam-3} Let $m=4$ and $f(x_1x_2x_3x_4)=x_1x_4+x_2x_3+x_1$ be a Boolean function from $\mathbb{F}_{2}^4$ to $\mathbb{F}_2$. Take $S={\rm supp}(f)$. Then there is no PEST in $\Gamma$. Define $S'=\{(0,z):z\in S\}\cup \{(1,z):z\in S\}\subseteq \mathbb{F}_2^5$ and let $\Gamma'={\rm Cay}(\mathbb{F}_2^5,S')$. Then $\Gamma'$ has PEST between two edges.\end{exam}

\begin{proof} 
The Walsh-Hadamard Transformation of $f$ is
\begin{eqnarray*}
  \widehat{f}(y_1y_2y_3y_4) &=& \sum_{x_1,x_2,x_3,x_4\in \mathbb{F}_2}(-1)^{x_1+x_1x_4+x_2x_3+x_1y_1+x_2y_2+x_3y_3+x_4y_4} \\
 &=& -\sum_{x_1,x_2,x_3}(-1)^{x_1+x_2x_3+x_1y_1+x_2y_2+x_3y_3}\sum_{x_4}(-1)^{x_4(x_1+y_4)} \\
   &=&- 2(-1)^{y_4+y_1y_4}\sum_{x_2}(-1)^{x_2y_2}\sum_{x_3}(-1)^{x_3(x_2+y_3)}\\
   &=&4(-1)^{y_4+y_1y_4+y_2y_3}.
\end{eqnarray*}
Thus $f$ is a bent function. It is obvious that
\begin{equation*}
  S=\{(1000),(0111),(0110),(1010),(1100),(1111)\}.
\end{equation*}
The eigenvalues of $\Gamma$ are:
\begin{eqnarray*}
  \lambda_\mathbf{x}&=&\sum_{\mathbf{z}\in S}(-1)^{\mathbf{z}\cdot\mathbf{x}}\\
  &=&(-1)^{x_1}+(-1)^{x_2+x_3+x_4}+(-1)^{x_2+x_3}+(-1)^{x_1+x_3}\\
  &&+(-1)^{x_1+x_2}+(-1)^{x_1+x_2+x_3+x_4}.\\
\end{eqnarray*}
Thus, we get the following table:
\begin{center}The eigenvalues of $\Gamma$
{\small
\begin{tabular}{|c|c|c|c|c|c|c|c|c|}
  \hline
  $\mathbf{x}$&(0000) & (0001) & (0010) & (0011) & (0100) & (0101) & (0110) & (0111) \\
  \hline
  $\lambda_{\mathbf{x}}$&6 & 2& -2 & 2 & -2 & 2& 2 & -2 \\
  \hline
  \hline
 $\mathbf{x}$& (1000) & (1001) & (1010) & (1011)& (1100) & (1101)& (1110) & (1111)\\
  \hline
$\lambda_{\mathbf{x}}$ & -2 & -2 & -2 & -2 & -2 & -2 & 2& 2 \\
  \hline
\end{tabular}}
\end{center}
 Moreover, if $a=(0000),b=(1111), c=(1000), d=(0111)$, then
\begin{eqnarray*}
  \Omega_+&=&\{(x_1x_2x_3x_4): x_1+x_2+x_3+x_4=1, x_1=0\},\\
 \Omega_-&=&\{(x_1x_2x_3x_4): x_1+x_2+x_3+x_4=1,x_1=1\}.
\end{eqnarray*}
That is
\begin{equation*}
  \Omega_+=\{(0100),(0010),(0001),(0111)\}, \Omega_-=\{(1000),(1011),(1110),(1101)\}
\end{equation*}
One can verify that the condition (2) of Theorem \ref{main-thm} fails in this case. Thus there is no PEST between $(a,b)$ and $(c,d)$. Similarly, by a direct checking, we can find that there is no PEST occurring in $\Gamma$ for different choices of $a,b,c,d$.

Now, we consider the graph $\Gamma'$. For every $\mathbf{x}\in \mathbb{F}_2^4$, we have
\begin{equation*}
  \lambda_{(0\mathbf{x})}=\sum_{(0,\mathbf{z})\in S'}(-1)^{\mathbf{z}\cdot \mathbf{x}}+\sum_{(1,\mathbf{z})\in S'}(-1)^{\mathbf{z}\cdot \mathbf{x}}=2\sum_{z\in S}(-1)^{\mathbf{z}\cdot \mathbf{x}}=2\lambda_{\mathbf{x}},
\end{equation*}
and
\begin{equation*}
  \lambda_{(1\mathbf{x})}=\sum_{(0,\mathbf{z})\in S'}(-1)^{\mathbf{z}\cdot \mathbf{x}}+\sum_{(1,\mathbf{z})\in S'}(-1)^{1+\mathbf{z}\cdot \mathbf{x}}=0.
\end{equation*}
If we take
\begin{equation*}
  a=(00000), b=(11111), c=(10000),d=(01111),
\end{equation*}
then
\begin{eqnarray*}
  \Omega_+'&=&\{(x_1x_2x_3x_4x_5)\in \mathbb{F}_2^5: x_1+x_2+\cdots+x_5=1, x_1=0\}, \\
  \Omega_-'&=&\{(x_1x_2x_3x_4x_5)\in \mathbb{F}_2^5: x_1+x_2+\cdots+x_5=1, x_1=1\}.
\end{eqnarray*}
Therefore,
\begin{equation*}
  \Omega_+'=\{(0\mathbf{x}): \mathbf{x}\in \Omega_+\cup \Omega_-\}, \Omega_-'=\{(1\mathbf{x}): \mathbf{x}\in \Omega_+\cup \Omega_-\}.
\end{equation*}
The desired result now follows from Theorem \ref{main-thm}.\qed
\end{proof}

In fact, one can show that the graph $\Gamma$ in the above Example \ref{exam-3} admits uniform mixing. We guess that if a cubelike graph has uniform mixing, then it cannot have PEST.

The following example is actually based on Theorem \ref{thm-semi}.

\begin{exam}Let the connection set $S$ be defined as
\begin{equation*}
  S=\{(0111),(0101),(0011),(0110),(1111),(1101),(1011),(1110)\} \subseteq \mathbb{F}_2^4.
\end{equation*}
Let $\Gamma={\rm Cay}(\mathbb{F}_2^4,S)$ be the cubelike graph with connection set $S$. Take
\begin{equation*}
  a=(0000), b=(1111), c=(1000), d=(0111).
\end{equation*}
Then $\Gamma$ has PEST between $(a,b)$ and $(c,d)$.
\end{exam}
\begin{proof}Under a trace-orthogonal basis of $\mathbb{F}_2^4$, we have
\begin{eqnarray*}
  T_1 &=& \{(x_1x_2x_3x_4)\in \mathbb{F}_2^4: x_1+x_2+x_3+x_4=1\}, \\
  \Omega_+ &=&  \{(x_1x_2x_3x_4)\in \mathbb{F}_2^4: x_1+x_2+x_3+x_4=1, x_1=1\},\\
 \Omega_- &=&  \{(x_1x_2x_3x_4)\in \mathbb{F}_2^4: x_1+x_2+x_3+x_4=1,x_1=0\}.
\end{eqnarray*}
Thus,
\begin{eqnarray*}
  T_1 &=& \{(1000), (0100),(0010),(0001), (0111),(1011),(1101),(1110)\}, \\
 \Omega_+&=&\{(1101),(1011), (1110),(1000)\}, \\
 \Omega_- &=& \{(0111),(0100),(0010), (0001)\}.
\end{eqnarray*}
Moreover, for every $(x_1x_2x_3x_4)\in \mathbb{F}_2^4$, the corresponding eigenvalue of $\Gamma$ is
\begin{eqnarray*}
  &&\lambda_{(x_1x_2x_3x_4)}\\
  &=&\sum_{\mathbf{s}\in S}(-1)^{\mathbf{s}\cdot \mathbf{x}}\\
  &=&(-1)^{x_2+x_3+x_4}+(-1)^{x_2+x_4}+(-1)^{x_3+x_4}+(-1)^{x_2+x_3}\\
  &&+(-1)^{x_1+x_2+x_3+x_4}+(-1)^{x_1+x_2+x_4}+(-1)^{x_1+x_3+x_4}+(-1)^{x_1+x_2+x_3}.
\end{eqnarray*}
So that we have
\begin{eqnarray*}
  &&\lambda_{(0111)}=4,\lambda_{(0100)}=\lambda_{(0010)}=\lambda_{(0001)}=-4, \\
  &&\lambda_{(1101)}=\lambda_{(1011)}=\lambda_{(1110)}=\lambda_{(1000)}=0.
\end{eqnarray*}
Thus for every $x\in \Omega_-$, $v_2(\lambda_{x_0}-\lambda_x)=v_2(\pm 4)=2$, and for every $y\in \Omega_+, v_2(\lambda_{x_0}-\lambda_y)=\infty$. Hence, $\Gamma$ has PEST between $(a,b)$ and $(c,d)$ at time $\frac{\pi}{4}$.

Below, for the convenience of the reader, we would like to give an explanation on how we obtain the connection set $S$. Firstly, we find a primitive polynomial of degree $3$ over $\mathbb{F}_2$. Here we choose $m(x)=x^3+x^2+1$. Suppose that $m(\alpha)=0$. Then $\mathbb{F}_8=\mathbb{F}(\alpha)$. It can be verified that $\{\alpha, \alpha^2,\alpha^4\}$ is a trace-orthogonal basis of $\mathbb{F}_8$ over $\mathbb{F}_2$. Take $f(x)={\rm Tr}(x^3)$. Then by Lemma \ref{lem-5}, $f(x)$ is a semi-bent function. One can check that ${\rm supp}(f)=\{1,\alpha^3, \alpha^5,\alpha^6\}$. Secondly, we compute the coordinates of the elements in ${\rm supp}(f)$ under the trace-orthogonal basis.
\begin{equation*}
 \left\{\begin{array}{ccc}
          1 &=& 1\alpha+1\alpha^2+1\alpha^4,\\
  \alpha^3&=& 1\alpha+0\alpha^2+1\alpha^4,\\
  \alpha^5 &=& 0\alpha+1\alpha^2+1\alpha^4,\\
  \alpha^6&=&1\alpha+1\alpha^2+0\alpha^4.
        \end{array}
  \right.
\end{equation*}
Thus, we have ${\rm supp}(f)=\{(111),(101),(011),(110)\}$. Using the ``lifting technique", we get the connection set $S$.\qed
 \end{proof}
\section*{Conclusion Remarks}

In this paper, we provide an explicit and tractable characterization on cubelike graphs having PEST (see Lemma \ref{lem-1} and Theorem \ref{main-thm}). By importing a so called ``lifting technique", we show that one can obtain cubelike graphs having PEST by using bent or semi-bent functions (see Theorem \ref{thm-4} and Theorem \ref{thm-semi}). In fact, we can also use some plateau functions to get graphs which have PEST, see Example \ref{exam-1}. Characterizing such graphs is one of our further research topics.


%
%



\end{document}